\newcommand{\astring}{A}
\newcommand{\bstring}{B}
\newcommand{\difftables}{C}
\newcommand{\psam}{\mathcal{J}}
\newcommand{\ssam}{\mathcal{K}}
\newcommand{\atam}{\mathcal{L}}
\newcommand{\submat}[5]{#1[#2..#3,#4..#5]}
\newcommand{\nextmatch}[3]{\mathrm{NextMatch}_#3(#1,#2)}
\newcommand{\prevmatch}[3]{\mathrm{PrevMatch}_#3(#1, #2)}
\newcommand{\nextmatchb}[1]{\mathrm{NextMatch}_#1}
\newcommand{\prevmatchb}[1]{\mathrm{PrevMatch}_#1}
\newcommand{\nexttable}[3]{\mathrm{NextTable}_#3[#1,#2]}
\newcommand{\prevtable}[3]{\mathrm{PrevTable}_#3[#1,#2]}
\newcommand{\nexttableb}[1]{\mathrm{NextTable}_#1}
\newcommand{\prevtableb}[1]{\mathrm{PrevTable}_#1}
\newcommand{\arrayName}{BA}
\newcommand{\arrayAt}[1]{\arrayName[#1]}
\newcommand{\density}[1]{#1^\square}
\tikzstyle{vertex}=[circle, draw, inner sep=0pt, minimum size=6pt]
\newcommand{\vertex}{\node[vertex]}
\theoremstyle{definition}
\newtheorem{theorem}{Theorem}[section]
\newtheorem{lemma}[theorem]{Lemma}
\newtheorem{proposition}[theorem]{Proposition}
\newtheorem{corollary}[theorem]{Corollary}
\newcommand{\RN}[1]{%
  \textup{\uppercase\expandafter{\romannumeral#1}}%
}
\newcommand{\rn}[1]{%
  \textup{\expandafter{\romannumeral#1}}%
}
\begin{document}


\title{Dynamic all scores matrices for LCS score}
\author{Amir Carmel\thanks{Department of Computer Science, Ben-Gurion University of the Negev.
}
 \and Dekel Tsur$^*$ \and Michal Ziv-Ukelson$^*$}
\date{}
\maketitle

\begin{abstract}
The problem of aligning two strings $\astring, \bstring$
in order to determine their similarity is fundamental in the field of
pattern matching. An important concept in this domain is the ``all scores
matrix'' that encodes the local alignment comparison of two strings. Namely,
let $\ssam$ denote the all scores matrix containing the alignment score of
every substring of $\bstring$ with $\astring$, and let $\psam$ denote the all
scores matrix containing the alignment score of every suffix of $\bstring$ with every prefix of $\astring$.

In this paper we consider the problem of maintaining an all scores matrix  where
the scoring function is the LCS score, while supporting single character prepend
and append operations to $\astring$ and $\bstring$.
Our algorithms
exploit the sparsity parameters $L=LCS(\astring,\bstring)$ and $\Delta = |\bstring|-L$.
For the matrix $\ssam$ we propose
an algorithm that supports incremental operations to both ends of
$\astring$ in $O(\Delta)$ time. Whilst for the matrix $\psam$
we propose an algorithm that supports a single type
of incremental operation, either a prepend operation to $\astring$ or an append
operation to $\bstring$, in $O(L)$ time. This structure can also be extended
to support both operations simultaneously in $O(L \log \log L)$ time.
\end{abstract}


%

\section{Introduction}

In the classical problem of sequence alignment, two strings are aligned
according to some predifined scoring function. The common scoring schemes are
the Edit Distance (ED) and the Longest Common Subsequence (LCS). This problem is
fundamental in the field of pattern matching. It has broad applications to many
different fields in computer science, among others:
computer-vision, bioinformatics and natural language processing. 
Hence, it is of no suprise that this problem has attracted a vast amount of research
and publications over the years.

Given two strings $\astring$ and $\bstring$ of lengths
$m$ and $n$, respectively.
The alignment problem of $\astring$ and $\bstring$ can be naturally viewed as a
shortest path problem on the \emph{alignment graph} of $\astring$ and
$\bstring$. That is, an $(n+1) \times (m+1)$ grid graph, in which, horizontal (respectively,
vertical) edges correspond to alignment of a character in
$\astring$ (respectively, $\bstring$) with a gap, and diagonal edges correspond
to alignment of two characters in $\astring$ and $\bstring$
(see Figure~\ref{figure:k-iams}).

Landau et al.~\cite{landau1998incremental} introduced the problem of incremental
string comparison, i.e.\ given an encoding of the global comparison of two
strings $\astring$ and $\bstring$, to efficiently compute the answer for
$\astring$ and $\sigma\bstring$, and the answer for $\sigma\astring$ and 
$\bstring$.
Furthermore, they show how incremental string comparison can be used to obtain
more efficient algorithms for various problems in pattern matching, such as:
the longest prefix approximate matching problem, the approximate overlap problem
and cyclic string comparison.
Incremental string comparison has also been applied to finding all approximate
gapped palindromes~\cite{hsu2010finding}, approximate regularities in
strings~\cite{christodoulakis2005implementing, zhang2008generalized},
consecutive suffix alignment~\cite{hyyro2008efficient, landau2007two} and
more~\cite{lai2017online, sokol2014speeding}. 
Several
improvements to the algorithm of Landau et al.~\cite{landau1998incremental} have been proposed
over the years~\cite{hyyro2016compacting, hyyro2015dynamic, ishida2005fully,
kim2000dynamic}.

All scores matrices were introduced by Apostolico et
al.~\cite{ApostolicoALM90} in order to obtain fast parallel algorithms for LCS computation.
%
An \emph{all scores matrix} is a matrix that stores the optimal
alignment scores of one or more types from the following types:
(\RN{1}) $\bstring$ against every substring of $\astring$,
(\RN{2}) $\astring$ against every substring of $\bstring$,
(\RN{3}) every suffix of $\astring$ against every
prefix of $\bstring$,
and
(\RN{4}) every suffix of $\bstring$ against every prefix of $\astring$.
We denote by $\atam$ the all scores matrix containing the optimal alignment scores
of all the types defined above.
We denote by $\ssam$ the all scores matrix containing the
optimal alignment scores of type~(\RN{2}), and
denote by $\psam$ the all scores matrix containing the optimal alignment scores
of type~(\RN{4})
(see Figure~\ref{figure:k-iams}).
Due to symmetry, we will ignore the all scores matrices that contain optimal alignment scores
of type~(\RN{1}) or of type~(\RN{3}).
All scores matrices are also called DIST
matrices~\cite{ApostolicoALM90, Schmidt98} or highest-score
matrices~\cite{tiskin2008semilocal}.

The problem of constructing all scores matrices has been studied in~\cite{alves2008all, Carmel2018, MatarazzoTZ,
sakai2018substring, Schmidt98,tiskin2008semilocal, tiskin2008semi}.
We note that this problem is a special case of the more general problem of
computing all pairs shortest paths in planar graphs~\cite{cabello2006many, cabello2013multiple, cohen2017fast, eisenstat2013linear, gawrychowski2018better, mozes2012exact}.

For an $n \times n$ matrix $\mathcal{D}$, its \emph{density}
matrix $\density{\mathcal{D}}$ is an $(n-1) \times (n-1)$ matrix, where
$\density{\mathcal{D}}[i,j]=\mathcal{D}[i-1,j-1]+\mathcal{D}[i,j]-\mathcal{D}[i-1,j]-\mathcal{D}[i,j-1]$.
A matrix is called \emph{Monge} (resp. \emph{anti-Monge}) if its density matrix
is non-negative (resp. non-positive), and \emph{sub-unit Monge} (resp.
\emph{sub-unit anti-Monge}) if every row or column of the density matrix
contains at most one non-zero element, and all the non-zero elements are equal
to $1$ (resp. $-1$). All scores matrices are known to be sub-unit
Monge or sub-unit anti-Monge~\cite{tiskin2008semilocal}. Hence, all scores
matrices can be encoded in linear space. Consequently, supporting incremental operations for all scores matrices can also serve to further reduce the space complexity for
Incremental String Comparison, which was noted in~\cite{hyyro2015dynamic} to be
the main practical limitation.


We next consider the problem of incremental construction of all
scores matrices.
That is, we wish to maintain an all scores matrix of two strings $\astring$
and $\bstring$ while supporting all or some of the following operations:
appending a character to $\astring$, prepending a character to $\astring$,
appending a character to $\bstring$, and prepending a character to $\bstring$.
If the number of supported operations from the four operations above is $k$,
we refer to the problem as the \emph{$k$-sided incremental all scores matrix
problem ($k$-IASM)}.

The algorithm of Schmidt~\cite{Schmidt98} for all scores matrix construction also
solves the $2$-IASM problem on the matrix $\atam$.
For discrete score functions, including LCS scores,
the algorithm of Schmidt requires $O(m+n)$ space and it
supports incremental operations to the right ends of $\astring$ and
$\bstring$ in $O(n)$ and $O(m)$ time, respectively.
For LCS scores, Tiskin~\cite[Chapter 5.3]{Tiskin} utilized the property that the
all scores matrix $\atam$ is unit-Monge.
His algorithm solves the $4$-IASM problem on $\atam$. The algorithm uses $O(m+n)$ space and
supports incremental operations in either $O(n)$ or $O(m)$ time.
Restricting the set of supported incremental operations and
maintaining either the matrix $\psam$ or $\ssam$ rather than the full matrix $\atam$
allows for faster algorithms.
Such an algorithm was given in Landau et al.~\cite{landau2007two}.
The algorithm of Landau et al.\ allows prepend operations to the string
$\astring$ while maintaining the alignment score between every suffix of
$\bstring$ to every prefix of $\astring$, thus it solves $1$-IASM on the matrix
$\psam$.
Amortized on $m$ incremental
operations, the algorithm runs in $O(L)$ time per operation,
where $L$ is the length of the longest common subsequence of $\astring$ and $\bstring$.
The space complexity of the algorithm is $O(n)$.

\subsection{Our Contribution}

In this paper we study the $k$-IASM problem under LCS score.
We exploit the sparsity parameters $L$ and $\Delta =n-L$, to design faster algorithms.
These parameters have already been extensively used to
give faster LCS algorithms~\cite{apostolico1987longest,
bringman2018multivariate, hirschberg1977algorithms, myers1986ano, wu1990np}.
In order to obtain faster running times, we restrict the set of supported
incremental operations (namely $k$ is either $1$ or $2$), and we also
maintain either $\psam$ or $\ssam$ and not the full matrix $\atam$.


\begin{figure}
\centering
\subfigure[The all substring-string LCS problem]{
\begin{tikzpicture}[scale=0.65, transform
shape, dashed_style/.style={->,dashed}, path_style/.style={->,very thick}]

\tikzstyle{greyvertex}=[circle, draw, inner sep=0pt, minimum
size=6pt, fill=black!50]
\tikzstyle{lightgreyvertex}=[circle, draw, inner sep=0pt, minimum
size=6pt, fill=light gray]

    \draw[thick, ->] (-0.5,0.5) -- (-0.5,-1) node[midway,left] {$\bstring$};
    \draw[thick, ->] (-0.5,0.5) --  (1,0.5) node[midway,above] {$\astring$};
\vertex (c0r0) at (0,-0) {};
\node[greyvertex] (c1r0) at (1,-0) {};
\vertex (c2r0) at (2,-0) {};
\vertex (c3r0) at (3,-0) {};
\vertex (c4r0) at (4,-0) {};
\vertex (c5r0) at (5,-0) {};
\vertex (c6r0) at (6,-0) {};
\node[greyvertex] (c7r0) at (7,-0) {};
\vertex (c0r1) at (0,-1) {};
\node[greyvertex] (c1r1) at (1,-1) {};
\vertex (c2r1) at (2,-1) {};
\vertex (c3r1) at (3,-1) {};
\vertex (c4r1) at (4,-1) {};
\vertex (c5r1) at (5,-1) {};
\vertex (c6r1) at (6,-1) {};
\node[greyvertex] (c7r1) at (7,-1) {};
\vertex (c0r2) at (0,-2) {};
\node[greyvertex] (c1r2) at (1,-2) {};
\vertex (c2r2) at (2,-2) {};
\vertex (c3r2) at (3,-2) {};
\vertex (c4r2) at (4,-2) {};
\vertex (c5r2) at (5,-2) {};
\vertex (c6r2) at (6,-2) {};
\node[greyvertex] (c7r2) at (7,-2) {};
\vertex (c0r3) at (0,-3) {};
\node[greyvertex] (c1r3) at (1,-3) {};
\vertex (c2r3) at (2,-3) {};
\vertex (c3r3) at (3,-3) {};
\vertex (c4r3) at (4,-3) {};
\vertex (c5r3) at (5,-3) {};
\vertex (c6r3) at (6,-3) {};
\node[greyvertex] (c7r3) at (7,-3) {};
\vertex (c0r4) at (0,-4) {};
\node[greyvertex] (c1r4) at (1,-4) {};
\vertex (c2r4) at (2,-4) {};
\vertex (c3r4) at (3,-4) {};
\vertex (c4r4) at (4,-4) {};
\vertex (c5r4) at (5,-4) {};
\vertex (c6r4) at (6,-4) {};
\node[greyvertex] (c7r4) at (7,-4) {};
\vertex (c0r5) at (0,-5) {};
\node[greyvertex] (c1r5) at (1,-5) {};
\vertex (c2r5) at (2,-5) {};
\vertex (c3r5) at (3,-5) {};
\vertex (c4r5) at (4,-5) {};
\vertex (c5r5) at (5,-5) {};
\vertex (c6r5) at (6,-5) {};
\node[greyvertex] (c7r5) at (7,-5) {};
\tikzset{EdgeStyle/.style={->}}
\draw[dashed_style] (c0r0) -- (c1r0)  node[midway,above] {c};
\draw[->] (c1r0) -- (c2r0)  node[midway,above] {c};
\draw[->] (c2r0) -- (c3r0)  node[midway,above] {b};
\draw[->] (c3r0) -- (c4r0)  node[midway,above] {c};
\draw[->] (c4r0) -- (c5r0)  node[midway,above] {c};
\draw[->] (c5r0) -- (c6r0)  node[midway,above] {a};
\draw[->] (c6r0) -- (c7r0)  node[midway,above] {a};
\draw[dashed_style] (c0r1) edge node{} (c1r1);
\Edge (c1r1)(c2r1);
\Edge (c2r1)(c3r1);
\Edge (c3r1)(c4r1);
\Edge (c4r1)(c5r1);
\Edge (c5r1)(c6r1);
\Edge (c6r1)(c7r1);
\draw[dashed_style] (c0r2) edge node{} (c1r2);
\Edge (c1r2)(c2r2);
\Edge (c2r2)(c3r2);
\Edge (c3r2)(c4r2);
\Edge (c4r2)(c5r2);
\Edge (c5r2)(c6r2);
\Edge (c6r2)(c7r2);
\draw[dashed_style] (c0r3) edge node{} (c1r3);
\Edge (c1r3)(c2r3);
\draw[path_style] (c2r3) edge node{} (c3r3);
\draw[path_style] (c3r3) edge node{} (c4r3);
\draw[path_style] (c4r3) edge node{} (c5r3);
\Edge (c5r3)(c6r3);
\Edge (c6r3)(c7r3);
\draw[dashed_style] (c0r4) edge node{} (c1r4);
\Edge (c1r4)(c2r4);
\Edge (c2r4)(c3r4);
\Edge (c3r4)(c4r4);
\Edge (c4r4)(c5r4);
\Edge (c5r4)(c6r4);
\draw[path_style] (c6r4) edge node{} (c7r4);
\draw[dashed_style] (c0r5) edge node{} (c1r5);
\Edge (c1r5)(c2r5);
\Edge (c2r5)(c3r5);
\Edge (c3r5)(c4r5);
\Edge (c4r5)(c5r5);
\Edge (c5r5)(c6r5);
\Edge (c6r5)(c7r5);
\draw[dashed_style] (c0r0) -- (c0r1)  node[midway,left] {c};
\Edge (c1r0)(c1r1);
\Edge (c2r0)(c2r1);
\Edge (c3r0)(c3r1);
\Edge (c4r0)(c4r1);
\Edge (c5r0)(c5r1);
\Edge (c6r0)(c6r1);
\Edge (c7r0)(c7r1);
\draw[dashed_style] (c0r1) -- (c0r2)  node[midway,left] {a};
\Edge (c1r1)(c1r2);
\Edge (c2r1)(c2r2);
\Edge (c3r1)(c3r2);
\Edge (c4r1)(c4r2);
\Edge (c5r1)(c5r2);
\Edge (c6r1)(c6r2);
\Edge (c7r1)(c7r2);
\draw[dashed_style] (c0r2) -- (c0r3)  node[midway,left] {c};
\Edge (c1r2)(c1r3);
\Edge (c2r2)(c2r3);
\Edge (c3r2)(c3r3);
\Edge (c4r2)(c4r3);
\Edge (c5r2)(c5r3);
\Edge (c6r2)(c6r3);
\Edge (c7r2)(c7r3);
\draw[dashed_style] (c0r3) -- (c0r4)  node[midway,left] {a};
\Edge (c1r3)(c1r4);
\draw[->] (c2r3) edge node{} (c2r4);
\Edge (c3r3)(c3r4);
\Edge (c4r3)(c4r4);
\Edge (c5r3)(c5r4);
\Edge (c6r3)(c6r4);
\Edge (c7r3)(c7r4);
\draw[dashed_style] (c0r4) -- (c0r5)  node[midway,left] {b};
\Edge (c1r4)(c1r5);
\Edge (c2r4)(c2r5);
\Edge (c3r4)(c3r5);
\Edge (c4r4)(c4r5);
\Edge (c5r4)(c5r5);
\Edge (c6r4)(c6r5);
\Edge (c7r4)(c7r5);
\draw[dashed_style] (c0r0) edge node{} (c1r1);
\Edge (c1r0)(c2r1);
\Edge (c3r0)(c4r1);
\Edge (c4r0)(c5r1);

\Edge (c5r1)(c6r2);
\Edge (c6r1)(c7r2);

\draw[dashed_style] (c0r2) edge node{} (c1r3);
\draw[path_style] (c1r2) edge node{} (c2r3);
\Edge (c3r2)(c4r3);
\Edge (c4r2)(c5r3);

\draw[path_style] (c5r3) edge node{} (c6r4);
\Edge (c6r3)(c7r4);

\draw[->] (c2r4) edge node{} (c3r5);

\end{tikzpicture}
}
\quad
\subfigure[The all prefix-suffix LCS problem]{
\begin{tikzpicture}[scale=0.65, transform
shape, dashed_style/.style={->,dashed}, path_style/.style={->,very thick}]

\tikzstyle{greyvertex}=[circle, draw, inner sep=0pt, minimum
size=6pt, fill=black!50]
\tikzstyle{lightgreyvertex}=[circle, draw, inner sep=0pt, minimum
size=6pt, fill=light gray]

    \draw[thick, ->] (-0.5,0.5) -- (-0.5,-1) node[midway,left] {$\bstring$};
    \draw[thick, ->] (-0.5,0.5) --  (1,0.5) node[midway,above] {$\astring$};
\node[greyvertex] (c0r0) at (0,-0) {};
\vertex (c1r0) at (1,-0) {};
\vertex (c2r0) at (2,-0) {};
\vertex (c3r0) at (3,-0) {};
\vertex (c4r0) at (4,-0) {};
\vertex (c5r0) at (5,-0) {};
\vertex (c6r0) at (6,-0) {};
\vertex (c7r0) at (7,-0) {};
\node[greyvertex] (c0r1) at (0,-1) {};
\vertex (c1r1) at (1,-1) {};
\vertex (c2r1) at (2,-1) {};
\vertex (c3r1) at (3,-1) {};
\vertex (c4r1) at (4,-1) {};
\vertex (c5r1) at (5,-1) {};
\vertex (c6r1) at (6,-1) {};
\vertex (c7r1) at (7,-1) {};
\node[greyvertex] (c0r2) at (0,-2) {};
\vertex (c1r2) at (1,-2) {};
\vertex (c2r2) at (2,-2) {};
\vertex (c3r2) at (3,-2) {};
\vertex (c4r2) at (4,-2) {};
\vertex (c5r2) at (5,-2) {};
\vertex (c6r2) at (6,-2) {};
\vertex (c7r2) at (7,-2) {};
\node[greyvertex] (c0r3) at (0,-3) {};
\vertex (c1r3) at (1,-3) {};
\vertex (c2r3) at (2,-3) {};
\vertex (c3r3) at (3,-3) {};
\vertex (c4r3) at (4,-3) {};
\vertex (c5r3) at (5,-3) {};
\vertex (c6r3) at (6,-3) {};
\vertex (c7r3) at (7,-3) {};
\node[greyvertex] (c0r4) at (0,-4) {};
\node[greyvertex] (c1r4) at (1,-4) {};
\node[greyvertex] (c2r4) at (2,-4) {};
\node[greyvertex] (c3r4) at (3,-4) {};
\node[greyvertex] (c4r4) at (4,-4) {};
\node[greyvertex] (c5r4) at (5,-4) {};
\node[greyvertex] (c6r4) at (6,-4) {};
\node[greyvertex] (c7r4) at (7,-4) {};
\vertex (c0r5) at (0,-5) {};
\vertex (c1r5) at (1,-5) {};
\vertex (c2r5) at (2,-5) {};
\vertex (c3r5) at (3,-5) {};
\vertex (c4r5) at (4,-5) {};
\vertex (c5r5) at (5,-5) {};
\vertex (c6r5) at (6,-5) {};
\vertex (c7r5) at (7,-5) {};
\tikzset{EdgeStyle/.style={->}}
\draw[->] (c0r0) -- (c1r0)  node[midway,above] {c};
\draw[->] (c1r0) -- (c2r0)  node[midway,above] {c};
\draw[->] (c2r0) -- (c3r0)  node[midway,above] {b};
\draw[->] (c3r0) -- (c4r0)  node[midway,above] {c};
\draw[->] (c4r0) -- (c5r0)  node[midway,above] {c};
\draw[->] (c5r0) -- (c6r0)  node[midway,above] {a};
\draw[->] (c6r0) -- (c7r0)  node[midway,above] {a};
\draw[->] (c0r1) edge node{} (c1r1);
\Edge (c1r1)(c2r1);
\Edge (c2r1)(c3r1);
\Edge (c3r1)(c4r1);
\Edge (c4r1)(c5r1);
\Edge (c5r1)(c6r1);
\Edge (c6r1)(c7r1);
\Edge (c0r2)(c1r2);
\Edge (c1r2)(c2r2);
\Edge (c2r2)(c3r2);
\Edge (c3r2)(c4r2);
\Edge (c4r2)(c5r2);
\Edge (c5r2)(c6r2);
\Edge (c6r2)(c7r2);
\Edge (c0r3)(c1r3);
\Edge (c1r3)(c2r3);
\Edge (c2r3)(c3r3);
\Edge (c3r3)(c4r3);
\Edge (c4r3)(c5r3);
\Edge (c5r3)(c6r3);
\Edge (c6r3)(c7r3);
\Edge (c0r4)(c1r4);
\Edge (c1r4)(c2r4);
\Edge (c2r4)(c3r4);
\Edge (c3r4)(c4r4);
\Edge (c4r4)(c5r4);
\Edge (c5r4)(c6r4);
\Edge (c6r4)(c7r4);
\draw[dashed_style] (c0r5) edge node{} (c1r5);
\draw[dashed_style] (c1r5) edge node{} (c2r5);
\draw[dashed_style] (c2r5) edge node{} (c3r5);
\draw[dashed_style] (c3r5) edge node{} (c4r5);
\draw[dashed_style] (c4r5) edge node{} (c5r5);
\draw[dashed_style] (c5r5) edge node{} (c6r5);
\draw[dashed_style] (c6r5) edge node{} (c7r5);
\draw[->] (c0r0) -- (c0r1)  node[midway,left] {c};
\draw[->] (c1r0) edge node{} (c1r1);
\draw[->] (c2r0) edge node{} (c2r1);
\draw[->] (c3r0) edge node{} (c3r1);
\draw[->] (c4r0) edge node{} (c4r1);
\draw[->] (c5r0) edge node{} (c5r1);
\draw[->] (c6r0) edge node{} (c6r1);
\draw[->] (c7r0) edge node{} (c7r1);
\draw[->] (c0r1) -- (c0r2)  node[midway,left] {a};
\Edge (c1r1)(c1r2);
\Edge (c2r1)(c2r2);
\Edge (c3r1)(c3r2);
\Edge (c4r1)(c4r2);
\Edge (c5r1)(c5r2);
\Edge (c6r1)(c6r2);
\Edge (c7r1)(c7r2);
\draw[->] (c0r2) -- (c0r3)  node[midway,left] {c};
\Edge (c1r2)(c1r3);
\Edge (c2r2)(c2r3);
\Edge (c3r2)(c3r3);
\Edge (c4r2)(c4r3);
\Edge (c5r2)(c5r3);
\Edge (c6r2)(c6r3);
\Edge (c7r2)(c7r3);
\draw[->] (c0r3) -- (c0r4)  node[midway,left] {a};
\Edge (c1r3)(c1r4);
\Edge (c2r3)(c2r4);
\Edge (c3r3)(c3r4);
\Edge (c4r3)(c4r4);
\Edge (c5r3)(c5r4);
\Edge (c6r3)(c6r4);
\Edge (c7r3)(c7r4);
\draw[dashed_style] (c0r4) -- (c0r5)  node[midway,left] {b};
\draw[dashed_style] (c1r4) edge node{} (c1r5);
\draw[dashed_style] (c2r4) edge node{} (c2r5);
\draw[dashed_style] (c3r4) edge node{} (c3r5);
\draw[dashed_style] (c4r4) edge node{} (c4r5);
\draw[dashed_style] (c5r4) edge node{} (c5r5);
\draw[dashed_style] (c6r4) edge node{} (c6r5);
\draw[dashed_style] (c7r4) edge node{} (c7r5);
\draw[->]  (c0r0) edge node{} (c1r1);
\draw[->]  (c1r0) edge node{} (c2r1);
\draw[->]  (c3r0) edge node{} (c4r1);
\draw[->]  (c4r0) edge node{} (c5r1);

\Edge (c5r1)(c6r2);
\Edge (c6r1)(c7r2);

\Edge (c0r2)(c1r3);
\Edge (c1r2)(c2r3);
\Edge (c3r2)(c4r3);
\Edge (c4r2)(c5r3);

\Edge (c5r3)(c6r4);
\Edge (c6r3)(c7r4);

\draw[dashed_style] (c2r4) edge node{} (c3r5);

\end{tikzpicture}
}
\caption{The alignment graph for the strings $\astring = ccbccaa$ and $\bstring
= cacab$. Figure~(a) illustrates a prepend operation of character $c$ to
 $\astring$. The matrix $\ssam$ contains the optimal scores of paths from every vertex
 on the left border to every vertex on the right border (these vertices are
 colored gray).
 The optimal path that corresponds to $\ssam [2,4]$ is marked using
 thicker edges.
Figure~(b) illustrates an append operation of character $b$ to $\bstring$.
The matrix $\psam$ contains the optimal scores of paths from every vertex
on the left border to every vertex on the bottom border (these vertices are
colored gray).}
\label{figure:k-iams}
\end{figure}
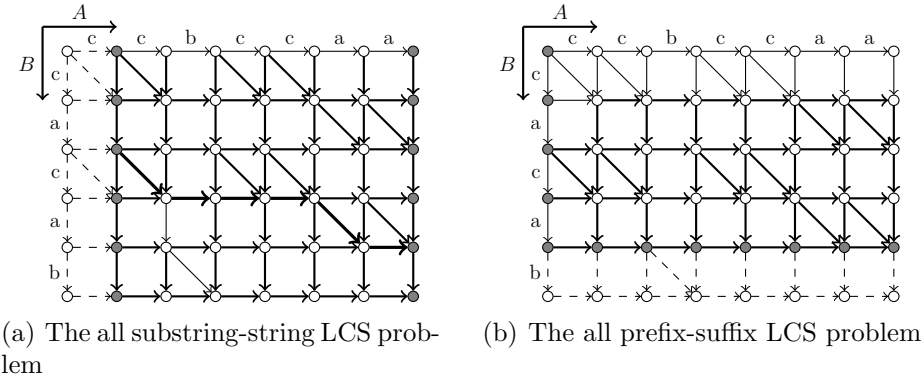

We note that various applications that utilize incremental all scores matrices
require a specific type of all scores matrix, and also use restricted sets
of update operations (see for example~\cite{hsu2010finding,hyyro2008efficient,
KentLZ, landau2007two, LandauSZ03, LandauSS01, LandauZ01, sakai2011almost,
Schmidt98, tiskin2008semi}). Thus, our algorithms are advantageous to these applications.

Our results are as follows (see also Table~\ref{tabel:iams}).
In Section~\ref{section:k-ssam} we give an algorithm for maintaining the
matrix $\ssam$ while supporting incremental operations to both sides
of string $\astring$ in $O(\Delta)$ time, using $O(n)$ space.
In Section~\ref{section:k-psam} we give an algorithm for maintaining
the matrix $\psam$ and supporting one type of update operations: either prepending
a character to $\astring$ or appending a character to $\bstring$.
The algorithm uses either $O(n)$ or $O(m)$ space, and the worst-case time
complexity of an update operation is $O(L)$.
We also show how to support both update operations simultaneously. This
increases the space complexity to $O(m+n)$ and the worst-case time complexity
of an update operation to $O(L \log \log L)$.
Our result improves the result of Landau et al.~\cite{landau2007two}
since the update time in Landau et al.\ is $O(L)$ amortized.
Additionally, the proof of correctness of our algorithm is simpler.
Finally, in Section~\ref{section:ssam+psam}, we give an algorithm for
maintaining both matrices $\ssam$ and $\psam$ while supporting append operations
to either string.
Here, the space complexity is $O(m+n)$ and the update time is $O(\Delta)$ or
$O(L)$.

We note that our proposed approach could also be applied to yield a
solution to the 4-sided problem with the same time and space complexity as the algorithm
of Tiskin~\cite{Tiskin}.

\begin{table}[]
\centering
\footnotesize
\begin{tabular}{|l|l|l|l|l|}
\hline
                                                                & Problem         & Supported operations                                                                                               & Time               & Space        \\ \hline
Landau et al.~\cite{landau2007two}        & 1-sided $\psam$ & Prepend  a
character to $\astring$                                                                                & 
$O(L)$ amortized     & $O(n)$       \\ \hline
Schmidt~\cite{Schmidt98} & 2-sided $\atam$ & Append a character to $\astring$ or $\bstring$ & $O(n)$/$O(m)$ & $O(m+n)$\\ \hline
Tiskin~\cite{Tiskin} & 4-sided $\atam$ & Prepend/append a character to $\astring$ or $\bstring$   & $O(n)$/$O(m)$  & $O(m+n)$     \\ \hline
Ours
(Thm.~\ref{theorem:2-sided-ssam}) & 2-sided $\ssam$ & Prepend and append a
character to $\astring$                                                         
& $O(\Delta)$        & $O(n)$       \\ \hline Ours
(Thm.~\ref{theorem-1-sided-psam}) & 1-sided $\psam$ &
\begin{tabular}[c]{@{}l@{}}Prepend a character to $\astring$, or \\ append a
character to $\bstring$\end{tabular} & $O(L)$             &
\begin{tabular}[c]{@{}l@{}} $O(n)$ \\ $O(m)$ \end{tabular} \\ \hline Ours
(Thm.~\ref{theorem:2-sided-psam}) & 2-sided $\psam$ & \begin{tabular}[c]{@{}l@{}}Prepend a character to $\astring$, and\\ append a character to $\bstring$\end{tabular} & $O(L \log \log L)$ & $O(m+n)$     \\ \hline
Ours (Thm.~\ref{theorem:ssam+psam}) & 2-sided $\ssam,\psam$ & Append a character to
$\astring$ or $\bstring$ & $O(\Delta)$/$O(L)$ & $O(m+n)$\\ \hline
\end{tabular}
\caption{Comparison of known and new results for $k$-IASM.
$\astring$ and
$\bstring$ denote two strings of lengths $m$ and $n$, respectively, over a
constant alphabet. In this setting, $L=LCS(\astring,\bstring)$ and
$\Delta = n-L$.\label{tabel:iams}
}
\end{table}



\section{Preliminaries}
We denote $[i:j] = \{i,i+1,\ldots,j\}$.
Let $\astring,\bstring$ be two strings of lengths $m,n$, respectively,
over an alphabet $\Sigma$ of constant size.
Denote by $G_{\astring,\bstring}$ the alignment graph of
$\astring$ and $\bstring$.
$G_{\astring,\bstring}$ is a grid graph over
a vertex set $[0:n]\times [0:m]$, where all vertical and horizontal
edges are present, and a diagonal edge between $(i-1,j-1)$ to $(i,j)$ is present if and only
if $\bstring[i]=\astring[j]$, in which case we say that $(i,j)$ is a \emph{match
point} in $G_{\astring,\bstring}$.
Diagonal edges have score 1, and horizontal and vertical edges have a score 0.
We denote $L=LCS(\astring,\bstring)$ and $\Delta = n-L$.

For a string $S=\sigma_1 \sigma_2 \cdots \sigma_n$, let $S[i..j] = \sigma_{i+1}
\cdots \sigma_j$ denote the substring of $S$ from $i+1$ to $j$. Consequently,
$S[i..i]$ denotes the empty string, for which we use the symbol $\epsilon$.
For a character $\sigma$ and a string $S$, let $\nextmatch{i}{\sigma}{S}$
denote the minimum index $i' > i$, such
that $\sigma = S[i']$ (and $\infty$ if no such $i'$ exists), and let
$\prevmatch{j}{\sigma}{S}$ denote the maximum index $j' \leq j$, such that
$\sigma = S[j']$ (and $-\infty$ if no such $j'$ exists).

We consider two types of all scores matrices (see
Figure~\ref{figure:k-iams}). Let $\psam$ denote the matrix
containing the scores of the optimal paths between every vertex on the left
border of $G_{\astring,\bstring}$ to every vertex on its bottom border, that
is, $\psam [i,j]= LCS(\bstring[i..n], \astring[0..j])$.
Also, let $\ssam$ denote the matrix containing the scores of the optimal
paths between every vertex on the left border of $G_{\astring,\bstring}$ to
every vertex on its right border, namely $\ssam [i,j]= LCS(\bstring[i..j],
\astring)$. If $i>j$ then no such path exists and we define $\ssam [i,j] = j-i$. 

For a matrix $\mathcal{D}$ over index set $[0:n] \times [0:m]$, 
we define its \emph{density matrix}, denoted by $\density{\mathcal{D}}$, 
to be a matrix over the index set $[1:n] \times [1:m]$ such that
$\density{\mathcal{D}}[i,j] =
(\mathcal{D}[i,j]+\mathcal{D}[i-1,j-1])-(\mathcal{D}[i-1,j]+\mathcal{D}[i,j-1])$.
The next property follows immediately from the above definition.

\begin{proposition}\label{proposition:density}
$\mathcal{D}[i,j] = \sum\limits_{\substack{1 \leq i' \leq i \\ 1 \leq j' \leq
j}}\density{\mathcal{D}}[i',j'] - \mathcal{D}[0,0] + \mathcal{D}[0,j] +
\mathcal{D}[i,0], $
for every $0 \leq i \leq n$ and $0 \leq j \leq m$.
\end{proposition}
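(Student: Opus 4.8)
The statement is essentially a telescoping identity, so the plan is to prove it by a double summation argument starting from the definition of $\density{\mathcal{D}}$. First I would fix $i$ and $j$ and consider the sum
\[
\sum_{\substack{1 \leq i' \leq i \\ 1 \leq j' \leq j}} \density{\mathcal{D}}[i',j']
= \sum_{\substack{1 \leq i' \leq i \\ 1 \leq j' \leq j}} \bigl( \mathcal{D}[i',j'] + \mathcal{D}[i'-1,j'-1] - \mathcal{D}[i'-1,j'] - \mathcal{D}[i',j'-1] \bigr).
\]
The natural approach is to sum over $j'$ first (for fixed $i'$), which telescopes the inner expression: the terms $\mathcal{D}[i',j'] - \mathcal{D}[i',j'-1]$ collapse to $\mathcal{D}[i',j] - \mathcal{D}[i',0]$, and the terms $\mathcal{D}[i'-1,j'-1] - \mathcal{D}[i'-1,j']$ collapse to $\mathcal{D}[i'-1,0] - \mathcal{D}[i'-1,j]$. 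Hence the inner sum over $j'$ equals $\bigl(\mathcal{D}[i',j] - \mathcal{D}[i'-1,j]\bigr) - \bigl(\mathcal{D}[i',0] - \mathcal{D}[i'-1,0]\bigr)$.

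Next I would sum this expression over $i'$ from $1$ to $i$. Both bracketed quantities are again first differences in the $i'$ index, so each telescopes: $\sum_{i'=1}^{i} \bigl(\mathcal{D}[i',j] - \mathcal{D}[i'-1,j]\bigr) = \mathcal{D}[i,j] - \mathcal{D}[0,j]$, and similarly $\sum_{i'=1}^{i} \bigl(\mathcal{D}[i',0] - \mathcal{D}[i'-1,0]\bigr) = \mathcal{D}[i,0] - \mathcal{D}[0,0]$. Combining the two telescopings gives
\[
\sum_{\substack{1 \leq i' \leq i \\ 1 \leq j' \leq j}} \density{\mathcal{D}}[i',j'] = \mathcal{D}[i,j] - \mathcal{D}[0,j] - \mathcal{D}[i,0] + \mathcal{D}[0,0],
\]
and rearranging yields exactly the claimed formula. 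One should also note the boundary cases $i=0$ or $j=0$, where the sum is empty and the right-hand side collapses to $\mathcal{D}[0,0]$ or $\mathcal{D}[i,0]$ (respectively $\mathcal{D}[0,j]$), matching trivially.

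There is no real obstacle here — the only thing to be careful about is keeping the index bookkeeping in the double telescoping straight, in particular that the "diagonal" term $\mathcal{D}[i'-1,j'-1]$ contributes to the $j'$-telescoping at shifted indices and that the signs line up. An alternative, perhaps cleaner, phrasing is to do a single induction on $i+j$ using the defining recurrence $\mathcal{D}[i,j] = \density{\mathcal{D}}[i,j] + \mathcal{D}[i-1,j] + \mathcal{D}[i,j-1] - \mathcal{D}[i-1,j-1]$, but the direct telescoping computation above is short enough that I would simply present it.
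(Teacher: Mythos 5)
Your telescoping computation is correct and the boundary cases check out; the paper itself gives no proof, stating that the proposition ``follows immediately from the above definition,'' and your double summation is exactly the routine verification being alluded to. Nothing further is needed.
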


We say that a matrix is \emph{sub-unit Monge} (resp., \emph{sub-unit anti-Monge}), if every row
and column of its density matrix contains at most one non-zero element, and all the non-zero elements
are equal to $1$ (resp., $-1$). It is well established that the matrices $\ssam$
and $\psam$ are sub-unit Monge and sub-unit anti-Monge, respectively
(cf.~\cite{tiskin2008semilocal}).
This implies the next corollary regarding the size of the encoding of $\ssam$ and $\psam$.

\begin{corollary}\label{corollary:encoding-dist}
The density matrix of $\psam$ has exactly $L$ non-zero elements,
and the density matrix of $\ssam$ has exactly $\Delta$ non-zero elements.
\end{corollary}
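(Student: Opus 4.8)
The plan is to combine the two structural facts already on the table: (i) $\psam$ and $\ssam$ are sub-unit anti-Monge and sub-unit Monge respectively, so each row and column of the density matrix has at most one nonzero entry, all of value $-1$ (resp.\ $+1$); and (ii) Proposition~\ref{proposition:density}, which recovers $\mathcal D[i,j]$ from the partial sums of $\density{\mathcal D}$ together with the boundary values $\mathcal D[0,0]$, $\mathcal D[0,j]$, $\mathcal D[i,0]$. The strategy is to evaluate a single carefully chosen entry of $\psam$ (resp.\ $\ssam$) whose value is forced by the LCS definition, and read off from Proposition~\ref{proposition:density} that the total sum of all density entries must equal a specific number; since every nonzero density entry contributes exactly $\pm1$ and no row/column holds more than one of them, that total sum is exactly the count of nonzero entries.

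For $\psam$: recall $\psam[i,j]=LCS(\bstring[i..n],\astring[0..j])$. The three boundary terms all vanish — $\psam[0,0]=LCS(\bstring,\epsilon)=0$, $\psam[0,j]=LCS(\bstring,\epsilon)=0$ since $\astring[0..0]=\epsilon$... wait, more carefully: $\psam[i,0]=LCS(\bstring[i..n],\epsilon)=0$ for all $i$, and $\psam[0,j]=LCS(\bstring[0..n],\astring[0..j])=LCS(\bstring,\astring[0..j])$, which need not vanish. So instead I would pick the corner $(i,j)=(n,m)$, where $\psam[n,m]=LCS(\bstring[n..n],\astring[0..m])=LCS(\epsilon,\astring)=0$; no wait, I want a corner where the boundary terms are controlled and the entry itself is $L$. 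The clean choice is $(i,j)=(0,m)$: then $\psam[0,m]=LCS(\bstring,\astring)=L$, while $\psam[0,0]=0$, $\psam[0,m]=L$ (the boundary term), $\psam[0,0]=0$ — this degenerates. The right move is to apply Proposition~\ref{proposition:density} at $(i,j)=(n,m)$: $\psam[n,m]=0$ and $\psam[n,0]=0$, $\psam[0,0]=0$, $\psam[0,m]=L$, so $\sum_{i',j'}\density{\psam}[i',j']=\psam[n,m]+\psam[0,0]-\psam[0,m]-\psam[n,0]=0+0-L-0=-L$. Since each nonzero entry of $\density{\psam}$ equals $-1$ and the sub-unit property guarantees at most one per row and per column, the number of nonzero entries is exactly $L$ (it cannot exceed $\min(n,m)\ge L$, and the sum $-L$ forces exactly $L$ of them).

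For $\ssam$: recall $\ssam[i,j]=LCS(\bstring[i..j],\astring)$ for $i\le j$. Here $\ssam[0,n]=LCS(\bstring,\astring)=L$, $\ssam[0,0]=LCS(\epsilon,\astring)=0$, $\ssam[n,n]=LCS(\epsilon,\astring)=0$, and $\ssam[i,0]$ is defined via the $i>j$ convention as $0-i=-i$, so $\ssam[n,0]=-n$. Proposition~\ref{proposition:density} at $(i,j)=(n,n)$ gives $\sum \density{\ssam}[i',j']=\ssam[n,n]+\ssam[0,0]-\ssam[0,n]-\ssam[n,0]=0+0-L-(-n)=n-L=\Delta$. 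Each nonzero density entry is $+1$ with at most one per row/column, so the number of nonzero entries is exactly $\Delta$.

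The main obstacle, and the part deserving genuine care in the writeup, is handling the boundary values of $\ssam$ correctly — in particular the entries $\ssam[i,j]$ with $i>j$, which are defined by the convention $\ssam[i,j]=j-i$ rather than by an LCS, and which therefore contribute in a controlled but nonzero way to the density sum along the last column. I would verify explicitly that the chosen corner $(n,n)$ uses only $\ssam[n,0]=-n$ from this regime and that $\ssam[0,0]$, $\ssam[0,n]$, $\ssam[n,n]$ are all genuine LCS values; everything else is then immediate from Proposition~\ref{proposition:density} and the sub-unit (anti-)Monge hypothesis. A secondary point to state cleanly is why "sum of entries equals $k$, each entry is $\pm1$, at most one per row and column" yields "exactly $k$ nonzero entries" — it is just that the entries are nonnegative multiples (in absolute value) summing to $|k|$, with each being $0$ or $1$ in absolute value.
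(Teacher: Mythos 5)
Your proposal is correct and follows essentially the same route as the paper: evaluate a corner entry of $\psam$ (resp.\ $\ssam$) via Proposition~\ref{proposition:density}, use the known boundary values to conclude that the density entries sum to $-L$ (resp.\ $\Delta$), and then invoke the sub-unit anti-Monge (resp.\ sub-unit Monge) sign structure to turn the sum into a count. The only difference is cosmetic: the paper leaves the $\ssam$ case as ``similarly,'' while you spell it out, correctly handling the convention $\ssam[n,0]=-n$ for entries with $i>j$.
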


\begin{proof}
For the first part, note that $\psam [n,m] = LCS(\bstring [n..n], \astring) =
0$.
However, by Proposition~\ref{proposition:density}, $\psam [n,m] =
\sum\limits_{\substack{1 \leq i' \leq n \\ 1 \leq j' \leq
m}}\density{\psam}[i',j'] - \psam[0,0] + \psam[0,m] + \psam[n,0]$. Hence,
$\psam[0,m] + \psam[n,0] - \psam[0,0] = -\sum\limits_{\substack{1 \leq i' \leq n \\
1 \leq j' \leq m}}\density{\psam}[i',j']$. 
By the definition of the matrix $\psam$, $\psam [0,0] = LCS(\bstring,\epsilon) =
0$, $\psam [0,m] = LCS(\bstring,\astring)=L$ and $\psam [n,0] = LCS(\epsilon,
\epsilon) = 0$. 
Therefore, $\sum\limits_{\substack{1 \leq i' \leq n \\ 1 \leq j' \leq
m}}-\density{\psam}[i',j'] = L$.
The second part of the corollary can be obtained similarly,
by the definition of the matrix $\ssam$.
\end{proof}

In what follows the non-zero cells of a density matrix are called
\emph{pivotal points}.
We encode an all scores matrix by storing the pivotal
points of its density matrix. This requires $O(L)$ space for
the $\psam$ matrix, and $O(\Delta)$ space for $\ssam$ (by
Corollary~\ref{corollary:encoding-dist}). In the subsequent sections we handle
the two matrices $\ssam$ and $\psam$ separately, and for each matrix we
consider both the
$1$-sided and the $2$-sided problems.

\section{Incremental $\ssam$ matrix}\label{section:k-ssam}
Recall that we need to maintain the all scores matrix, denoted $\ssam$,
of the strings $\astring$ and $\bstring$.
Our algorithm encodes the matrix $\ssam$ by storing the pivotal points of its density matrix.
Consider an incremental operation that prepends a character $\sigma$
to $\astring$, that is $\astring' = \sigma \astring$.
We need to compute the pivotal points of $\density{\ssam'}$, where $\ssam'$ is
the all scores matrix of $\astring'$ and $\bstring$.
Define $\difftables = \ssam' - \ssam$.
By definition, $\density{\ssam'} = \ssam^\square + \difftables^\square$.
We will next show how to compute the pivotal points of $\difftables^\square$.
This will allow the algorithm to compute the pivotal points of $\ssam'^\square$.
$\difftables$ contains either 0 or 1
values, since the new character $\sigma$ can increase the length of the LCS of
each alignment by at most 1.
See Figure~\ref{figure:inc-b-left-ssam} for an example of matrices $\ssam', \ssam,
\difftables$ and $\density{\difftables}$.

The optimal path from the $i$'th vertex on the first column to the
$j$'th vertex on the last column of $G_{\sigma \astring,\bstring}$ either
utilizes a match point that was generated due to the new prepended
character $\sigma$, or not. In the former case we can assume the optimal
path uses the topmost such match point, and in the latter case the score of
this path is equal to $\ssam [i,j]$.
Hence, we have $\ssam' [i,j] = \max \{
\ssam[\nextmatch{i}{\sigma}{\bstring},j]+1, \ssam[i,j] \}$.
This leads to the following proposition.
\begin{proposition}\label{proposition:left-inc-b}
$\ssam'[i,j] = \ssam[i,j] + 1 $ if and only if $\nextmatch{i}{\sigma}{\bstring}
< \infty$ and $\ssam [i,j]=\ssam [\nextmatch{i}{\sigma}{\bstring},j]$.
\end{proposition}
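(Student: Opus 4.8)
The plan is to reduce the statement to the identity
$\ssam'[i,j] = \max\{\ssam[\nextmatch{i}{\sigma}{\bstring},j]+1,\ \ssam[i,j]\}$
that is derived just above the proposition, after first pinning down two auxiliary facts: a monotonicity property of $\ssam$ in its first argument, and the reading of the degenerate entries. Concretely, I would record that $\ssam[i',j] \le \ssam[i,j]$ whenever $i \le i'$. For $i \le i' \le j$ this holds because $\bstring[i'..j]$ is a substring of $\bstring[i..j]$, so an optimal common subsequence of $\bstring[i'..j]$ and $\astring$ is also a common subsequence of $\bstring[i..j]$ and $\astring$; for $i' > j$ it follows from the convention $\ssam[i',j]=j-i'$ together with $\ssam[i,j]\ge \max(0,\,j-i)$. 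Writing $i^\ast=\nextmatch{i}{\sigma}{\bstring}$, we always have $i < i^\ast$, hence $\ssam[i^\ast,j]\le\ssam[i,j]$; and when $i^\ast=\infty$ the term $\ssam[i^\ast,j]+1$ is to be read as $-\infty$, consistently with the fact that in that case no path using a $\sigma$-match created by the prepend exists.

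For completeness I would re-justify the max identity rigorously, since the text gives only the intuition. The ``$\le$'' direction is a case analysis on whether an optimal common subsequence of $\bstring[i..j]$ and $\sigma\astring$ uses the leading $\sigma$ of $\sigma\astring$: if it does not, it is a common subsequence of $\bstring[i..j]$ and $\astring$ and so has length at most $\ssam[i,j]$; if it does, the $\sigma$ may be assumed matched to the leftmost occurrence of $\sigma$ in $\bstring[i..j]$ (position $i^\ast$, which must then satisfy $i^\ast\le j$), and the rest is a common subsequence of $\bstring[i^\ast..j]$ and $\astring$, of length at most $\ssam[i^\ast,j]$, so the total is at most $\ssam[i^\ast,j]+1$. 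The ``$\ge$'' direction is witnessed by the two obvious candidates: any optimal common subsequence of $\bstring[i..j]$ and $\astring$ (giving $\ssam[i,j]$), and, when $i^\ast\le j$, the $\sigma$ at position $i^\ast$ followed by an optimal common subsequence of $\bstring[i^\ast..j]$ and $\astring$ (giving $\ssam[i^\ast,j]+1$).

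With the identity and $\ssam[i^\ast,j]\le\ssam[i,j]$ in hand, the proposition is immediate. If $i^\ast<\infty$ and $\ssam[i,j]=\ssam[i^\ast,j]$, then $\ssam[i^\ast,j]+1=\ssam[i,j]+1>\ssam[i,j]$, so the maximum equals $\ssam[i,j]+1$, i.e.\ $\ssam'[i,j]=\ssam[i,j]+1$. Conversely, if $\ssam'[i,j]=\ssam[i,j]+1$, the maximum is strictly larger than $\ssam[i,j]$, which forces the first term to be the larger one and to equal $\ssam[i,j]+1$; in particular the first term is not $-\infty$, so $i^\ast<\infty$, and $\ssam[i^\ast,j]=\ssam[i,j]$. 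The only place that needs genuine care — and the step I expect to be the main obstacle — is the degenerate regime where $i^\ast$ is finite but larger than $j$ (no $\sigma$ inside $\bstring[i..j]$ though one exists further right), together with the book-keeping forced by the convention $\ssam[i,j]=j-i$ for $i>j$; one must check that both the monotonicity inequality and the max identity survive there, which they do because in those cases the first term only becomes more negative and never attains the value $\ssam[i,j]+1$.
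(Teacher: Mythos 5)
Your proposal follows essentially the same route as the paper: the paper likewise derives the identity $\ssam'[i,j]=\max\{\ssam[\nextmatch{i}{\sigma}{\bstring},j]+1,\,\ssam[i,j]\}$ from the case analysis on whether the new $\sigma$-match is used, and then treats the proposition as an immediate consequence; you simply make explicit the monotonicity $\ssam[i',j]\le\ssam[i,j]$ for $i\le i'$ and the bookkeeping for the degenerate convention $\ssam[i,j]=j-i$ when $i>j$, which the paper leaves implicit. One small correction: the auxiliary bound $\ssam[i,j]\ge\max(0,\,j-i)$ you invoke is false as stated (for $i\le j$ the LCS of $\bstring[i..j]$ with $\astring$ need not reach $j-i$, and for $i>j$ the convention gives $\ssam[i,j]=j-i<0$); what you actually need, and what is true, is that $\ssam[i,j]\ge 0$ when $i\le j$ and $\ssam[i,j]=j-i$ when $i>j$, which still gives $\ssam[i',j]=j-i'\le\ssam[i,j]$ whenever $i\le i'$ and $i'>j$. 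With that minor repair your argument is complete and matches the paper's.
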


We now describe our main lemma required for the incremental step. 

\begin{lemma}\label{lemma:left-inc-b-main}
$\difftables [i,j]=1$ if and only if $\nextmatch{i}{\sigma}{\bstring} < \infty$
and every row of
$\submat{\ssam^\square}{i+1}{\nextmatch{i}{\sigma}{\bstring}}{1}{j}$ has exactly
one pivotal point.
\end{lemma}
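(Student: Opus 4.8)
The plan is to reduce the statement, via Proposition~\ref{proposition:left-inc-b}, to a claim about partial row sums of $\density{\ssam}$, and then to read off the answer from the fact that $\ssam$ is sub-unit Monge.

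Write $k = \nextmatch{i}{\sigma}{\bstring}$. First I would dispose of the case $k = \infty$: then Proposition~\ref{proposition:left-inc-b} gives $\difftables[i,j] = 0$, while the right-hand side of the lemma is false by its own wording, so both sides fail and there is nothing to prove. So assume $k < \infty$, i.e. $i+1 \le k \le n$; by Proposition~\ref{proposition:left-inc-b} it then suffices to show that $\ssam[i,j] = \ssam[k,j]$ if and only if every row of $\submat{\ssam^\square}{i+1}{k}{1}{j}$ has exactly one pivotal point.

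The key step is a per-row identity. For a fixed $i' \in [i+1:k]$ I would apply Proposition~\ref{proposition:density} to $\ssam$ at $(i',j)$ and at $(i'-1,j)$ and subtract; the double sums collapse to a single row, the terms involving $\ssam[0,0]$ and $\ssam[0,j]$ cancel, and what remains is
\[
\ssam[i',j] - \ssam[i'-1,j] \;=\; \Big(\textstyle\sum_{b=1}^{j}\density{\ssam}[i',b]\Big) + \ssam[i',0] - \ssam[i'-1,0].
\]
Since $i' \ge 1$ we are in the regime $i > j$ of the definition of $\ssam$, so $\ssam[i',0] = -i'$ and $\ssam[i'-1,0] = -(i'-1)$ (the latter also at $i'-1 = 0$, as $\ssam[0,0] = 0$), whence $\ssam[i',j] - \ssam[i'-1,j] = \big(\sum_{b=1}^{j}\density{\ssam}[i',b]\big) - 1$. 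Because $\ssam$ is sub-unit Monge, row $i'$ of $\density{\ssam}$ has at most one non-zero entry and it equals $1$, so $\sum_{b=1}^{j}\density{\ssam}[i',b]$ is $1$ when the (unique) pivotal point of that row lies in columns $1,\dots,j$ and $0$ otherwise. Hence each difference $\ssam[i',j] - \ssam[i'-1,j]$ equals $0$ if the $i'$-th row of $\submat{\ssam^\square}{i+1}{k}{1}{j}$ contains a pivotal point and $-1$ if it does not.

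Finally I would telescope: $\ssam[k,j] - \ssam[i,j] = \sum_{i'=i+1}^{k}\big(\ssam[i',j] - \ssam[i'-1,j]\big)$ is a sum of non-positive terms, so it is zero iff every term is zero iff every row of $\submat{\ssam^\square}{i+1}{k}{1}{j}$ has exactly one pivotal point, which together with Proposition~\ref{proposition:left-inc-b} is exactly the assertion of the lemma. (One may sanity-check the sign: $\ssam[k,j] \le \ssam[i,j]$ since $\bstring[k..j]$ is a substring of $\bstring[i..j]$.) I do not expect a serious obstacle; the only things to watch are the boundary-column values $\ssam[\cdot,0]$ in the non-path regime, and the observation that ``a row of $\submat{\ssam^\square}{i+1}{k}{1}{j}$ has exactly one pivotal point'' is equivalent to ``the whole row of $\density{\ssam}$ has its unique pivotal point in columns $1,\dots,j$'' precisely because $\ssam$ is sub-unit Monge.
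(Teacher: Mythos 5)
Your proposal is correct and follows essentially the same route as the paper: reduce via Proposition~\ref{proposition:left-inc-b} to the condition $\ssam[i,j]=\ssam[k,j]$, expand with Proposition~\ref{proposition:density} using the boundary values $\ssam[\cdot,0]=-(\cdot)$, and conclude from the sub-unit Monge property that the resulting sum equals $k-i$ exactly when every row of $\submat{\ssam^\square}{i+1}{k}{1}{j}$ contains its pivotal point. The only differences are cosmetic: you telescope row-by-row where the paper subtracts the two full double sums at once, and you treat the $\nextmatchb{\bstring}=\infty$ case explicitly, which the paper leaves implicit in Proposition~\ref{proposition:left-inc-b}.
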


\begin{proof}
Let $\nextmatch{i}{\sigma}{\bstring}=k$.
By Proposition~\ref{proposition:density},
$\ssam [i,j]=\ssam [k,j]$ if and only if: \\
\begin{equation*}
\sum\limits_{\substack{1
\leq i' \leq i \\ 1 \leq j' \leq j}}\ssam^{\square}[i',j'] - \ssam[0,0] +
\ssam[0,j] + \ssam[i,0] = \sum\limits_{\substack{1 \leq i' \leq k
\\
1 \leq j' \leq j}}\ssam^{\square}[i',j'] - \ssam[0,0] + \ssam[0,j] +
\ssam[k,0].
\end{equation*}
After canceling the terms that appear in both sides, we obtain the 
equality
\begin{equation*}
\sum\limits_{\substack{i+1 \leq i' \leq k  \\ 1 \leq j' \leq
j}}\ssam^{\square}[i',j'] = \ssam[i,0] - \ssam[k,0].
\end{equation*} 
By the definition of the matrix $\ssam$, $\ssam[i,0]=-i$ and
$\ssam[k,0]=-k$, thus obtaining
\begin{equation*}
\sum\limits_{\substack{i+1 \leq i' \leq k  \\ 1 \leq j' \leq
j}}\ssam^{\square}[i',j'] = k-i.
\end{equation*}
Recall that $\ssam$ is a sub-unit Monge matrix and hence there is at most one pivotal point in every row of
$\ssam^\square$.
Therefore, $\sum\limits_{\substack{i+1 \leq i' \leq k  \\ 1 \leq j' \leq
j}}\ssam^{\square}[i',j'] = k-i$
if and only if every row of
$\submat{\ssam^\square}{i+1}{\nextmatch{i}{\sigma}{\bstring}}{1}{j}$ has exactly
one pivotal point.
The lemma now follows from Proposition~\ref{proposition:left-inc-b}.
\end{proof}

\begin{figure}[]
\centering
\begin{minipage}{0.4\linewidth}
\subfigure[$\ssam'$]{
\resizebox{1.3\textwidth}{!}{%
\begin{tabular}{|l|l|l|l|l|l|l|l|l|l|}
\hline
0  & 1  & 2  & 2                         & 2                         & 3  & 4  & 4                         & 4                         & 4                         \\ \hline
-1 & 0  & 1  & \cellcolor[HTML]{C0C0C0}2 & 2                         & 3  & 4  & 4                         & 4                         & 4                         \\ \hline
-2 & -1 & 0  & 1                         & \cellcolor[HTML]{C0C0C0}2 & 3  & 4  & 4                         & 4                         & 4                         \\ \hline
-3 & -2 & -1 & 0                         & 1                         & 2  & 3  & 3                         & 3                         & 3                         \\ \hline
-4 & -3 & -2 & -1                        & 0                         & 1  & 2  & \cellcolor[HTML]{C0C0C0}3 & 3                         & 3                         \\ \hline
-5 & -4 & -3 & -2                        & -1                        & 0  & 1  & 2                         & 2                         & 2                         \\ \hline
-6 & -5 & -4 & -3                        & -2                        & -1 & 0  & 1                         & \cellcolor[HTML]{C0C0C0}2 & 2                         \\ \hline
-7 & -6 & -5 & -4                        & -3                        & -2 & -1 & 0                         & 1                         & \cellcolor[HTML]{C0C0C0}2 \\ \hline
-8 & -7 & -6 & -5                        & -4                        & -3 & -2 & -1                        & 0                         & 1                         \\ \hline
-9 & -8 & -7 & -6                        & -5                        & -4 & -3 & -2                        & -1                        & 0                         \\ \hline
\end{tabular}
}
}

\subfigure[$\ssam$]{
\resizebox{1.3\textwidth}{!}{%
\begin{tabular}{|l|l|l|l|l|l|l|l|l|l|}
\hline
0  & 1  & 2  & 2                         & 2                         & 2                         & 3  & 3                         & 3                         & 3                         \\ \hline
-1 & 0  & 1  & \cellcolor[HTML]{C0C0C0}2 & 2                         & 2                         & 3  & 3                         & 3                         & 3                         \\ \hline
-2 & -1 & 0  & 1                         & 1                         & \cellcolor[HTML]{C0C0C0}2 & 3  & 3                         & 3                         & 3                         \\ \hline
-3 & -2 & -1 & 0                         & \cellcolor[HTML]{C0C0C0}1 & 2                         & 3  & 3                         & 3                         & 3                         \\ \hline
-4 & -3 & -2 & -1                        & 0                         & 1                         & 2  & 2                         & 2                         & 2                         \\ \hline
-5 & -4 & -3 & -2                        & -1                        & 0                         & 1  & \cellcolor[HTML]{C0C0C0}2 & 2                         & 2                         \\ \hline
-6 & -5 & -4 & -3                        & -2                        & -1                        & 0  & 1                         & \cellcolor[HTML]{C0C0C0}2 & 2                         \\ \hline
-7 & -6 & -5 & -4                        & -3                        & -2                        & -1 & 0                         & 1                         & 1                         \\ \hline
-8 & -7 & -6 & -5                        & -4                        & -3                        & -2 & -1                        & 0                         & \cellcolor[HTML]{C0C0C0}1 \\ \hline
-9 & -8 & -7 & -6                        & -5                        & -4                        & -3 & -2                        & -1                        & 0                         \\ \hline
\end{tabular}
}
}
\end{minipage}
\hfill
\begin{minipage}{0.4\linewidth}
\subfigure[$\difftables$]{
\resizebox{1.2\textwidth}{!}{%
\begin{tabular}{|l|l|l|l|l|l|l|l|l|l|}
\hline
0 & 0 & 0 & 0 & 0                         & \cellcolor[HTML]{E8E8E8}1 & 1 & 1                         & 1 & 1                         \\ \hline
0 & 0 & 0 & 0 & 0                         & \cellcolor[HTML]{E8E8E8}1 & 1 & 1                         & 1 & 1                         \\ \hline
0 & 0 & 0 & 0 & \cellcolor[HTML]{E8E8E8}1 & 1                         & 1 & 1                         & 1 & 1                         \\ \hline
0 & 0 & 0 & 0 & 0                         & 0                         & 0 & 0                         & 0 & 0                         \\ \hline
0 & 0 & 0 & 0 & 0                         & 0                         & 0 & \cellcolor[HTML]{E8E8E8}1 & 1 & 1                         \\ \hline
0 & 0 & 0 & 0 & 0                         & 0                         & 0 & 0                         & 0 & 0                         \\ \hline
0 & 0 & 0 & 0 & 0                         & 0                         & 0 & 0                         & 0 & 0                         \\ \hline
0 & 0 & 0 & 0 & 0                         & 0                         & 0 & 0                         & 0 & \cellcolor[HTML]{E8E8E8}1 \\ \hline
0 & 0 & 0 & 0 & 0                         & 0                         & 0 & 0                         & 0 & 0                         \\ \hline
0 & 0 & 0 & 0 & 0                         & 0                         & 0 & 0                         & 0 & 0                         \\ \hline
\end{tabular}
}
}

\subfigure[$\difftables^\square$]{
\resizebox{1.2\textwidth}{!}{%
\begin{tabular}{|l|l|l|l|l|l|l|l|l|}
\hline
0 & 0 & 0 & 0                          & 0                          & 0 & 0                          & 0 & 0                          \\ \hline
0 & 0 & 0 & \cellcolor[HTML]{C0C0C0}1  & \cellcolor[HTML]{C0C0C0}-1 & 0 & 0                          & 0 & 0                          \\ \hline
0 & 0 & 0 & \cellcolor[HTML]{C0C0C0}-1 & 0                          & 0 & 0                          & 0 & 0                          \\ \hline
0 & 0 & 0 & 0                          & 0                          & 0 & \cellcolor[HTML]{C0C0C0}1  & 0 & 0                          \\ \hline
0 & 0 & 0 & 0                          & 0                          & 0 & \cellcolor[HTML]{C0C0C0}-1 & 0 & 0                          \\ \hline
0 & 0 & 0 & 0                          & 0                          & 0 & 0                          & 0 & 0                          \\ \hline
0 & 0 & 0 & 0                          & 0                          & 0 & 0                          & 0 & \cellcolor[HTML]{C0C0C0}1  \\ \hline
0 & 0 & 0 & 0                          & 0                          & 0 & 0                          & 0 & \cellcolor[HTML]{C0C0C0}-1 \\ \hline
0 & 0 & 0 & 0                          & 0                          & 0 & 0                          & 0 & 0                          \\ \hline
\end{tabular}
}
}
\end{minipage}

\caption{An example of the matrices $\mathcal{K}, \mathcal{K'},\difftables$ and
$\difftables^\square$ for $\astring = bbcac$, $\bstring = ccabaccaa$,
and a prepend of the character $a$ to $\astring$.
Pivotal points are colored dark gray and step indices in $\difftables$ are
colored light gray.}
\label{figure:inc-b-left-ssam}
\end{figure}

The following corollary follows immediatly from Lemma~\ref{lemma:left-inc-b-main}.

\begin{corollary}\label{corollary:left-inc-b-monotonicity-diff-table}
If $\difftables [i,j] = 1$ then $\difftables [i,j'] = 1$ for every $j' \geq j$.
\end{corollary}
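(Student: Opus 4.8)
The plan is to derive the corollary directly from Lemma~\ref{lemma:left-inc-b-main} by observing that the characterization it provides is monotone in the column index. Write $k = \nextmatch{i}{\sigma}{\bstring}$ and note that $k$ depends only on $i$ and $\sigma$, not on the column argument. By the lemma, $\difftables[i,j]=1$ holds precisely when $k < \infty$ and each of the rows $i+1, i+2, \ldots, k$ of $\ssam^\square$, restricted to columns $1, \ldots, j$, contains exactly one pivotal point. So the whole task is to check that this condition, once true for $j$, remains true for every $j' \geq j$.

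For that I would invoke the sub-unit Monge property of $\ssam$ — already used inside the proof of Lemma~\ref{lemma:left-inc-b-main} — which says that each row of $\ssam^\square$ has at most one pivotal point in total. Hence, for a fixed row $i'$, the statement ``row $i'$ has exactly one pivotal point among columns $1, \ldots, j$'' is equivalent to ``the unique pivotal point of row $i'$ exists and lies in some column $\le j$'', and the latter is plainly monotone: enlarging $j$ to $j'$ can only keep that pivotal point within range, never remove it. Applying this to every row $i' \in \{i+1, \ldots, k\}$ simultaneously, and using that the condition $k < \infty$ is unaffected, we get that the hypothesis of Lemma~\ref{lemma:left-inc-b-main} holds for $(i,j')$ as well, so $\difftables[i,j']=1$.

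I do not expect a genuine obstacle here; the points that merit a word of care are (i) that ``exactly one'' decomposes into ``at most one'' (which is automatic and stable) plus ``at least one'' (which is what could fail, but only when $j$ shrinks, not when it grows), and (ii) the degenerate case $k=\infty$, which cannot arise under the hypothesis $\difftables[i,j]=1$ by the lemma itself. A shorter, purely semantic alternative would be: $\difftables[i,j]=1$ means $\ssam'[i,j]=\ssam[i,j]+1$, i.e.\ the match point created by the prepended character $\sigma$ lies on an optimal $i$-to-$j$ path in $G_{\sigma\astring,\bstring}$; extending the right endpoint from column $j$ to column $j' \ge j$ only appends horizontal edges, so that match point still lies on an optimal $i$-to-$j'$ path, whence $\ssam'[i,j']=\ssam[i,j']+1$. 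I would nonetheless present the density-matrix argument, since it stays within the combinatorial framework already set up and reuses Lemma~\ref{lemma:left-inc-b-main} verbatim.
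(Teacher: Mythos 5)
Your proposal is correct and is essentially the paper's own route: the paper offers no explicit proof, asserting that the corollary ``follows immediately'' from Lemma~\ref{lemma:left-inc-b-main}, and your argument simply makes the implicit monotonicity explicit --- since $\nextmatch{i}{\sigma}{\bstring}$ is independent of the column and each row of $\ssam^\square$ has at most one pivotal point, the condition ``exactly one pivotal point in columns $1,\ldots,j$'' of each relevant row can only remain true as $j$ grows. (Your alternative path-based aside is slightly imprecise --- appending horizontal edges preserves the score of a path but not automatically its optimality --- but you do not rely on it.)
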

 
We say that $(i,j)$ is a \emph{step index} in $\difftables$ if
$\difftables[i,j] \neq \difftables[i,j-1]$
(see Figure~\ref{figure:inc-b-left-ssam}).
The following lemma shows that the pivotal points of $\difftables^\square$ can
be obtained from the step indices of $\difftables$.
Therefore, the computation of these indices will be the focus of our algorithm.

\begin{lemma}\label{lemma:pivotal-points}
For every cell $(i,j)$ in $\density{\difftables}$,
\begin{itemize}
  \item $\density{\difftables}[i,j]=1$ if and only if $(i,j)$ is a step index in
  $\difftables$ and $(i-1,j)$ is not a step index.
  \item $\density{\difftables}[i,j]=-1$ if and only if $(i,j)$ is not a step
  index in $\difftables$ and $(i-1,j)$ is a step index.
\end{itemize}
\end{lemma}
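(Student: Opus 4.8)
The plan is to compute $\density{\difftables}[i,j]$ directly from its definition $\density{\difftables}[i,j] = \difftables[i,j] + \difftables[i-1,j-1] - \difftables[i-1,j] - \difftables[i,j-1]$, by case analysis on the four binary values involved. Since $\difftables$ takes only the values $0$ and $1$ (noted before Proposition~\ref{proposition:left-inc-b}), there are at most $16$ patterns, but most are ruled out by the monotonicity established in Corollary~\ref{corollary:left-inc-b-monotonicity-diff-table}: within each row, the sequence $\difftables[i,0], \difftables[i,1], \dots$ is nondecreasing (in fact $0$'s followed by $1$'s). So in each row there is a unique threshold column, and $(i,j)$ is a step index exactly when $j$ is that threshold, i.e.\ $\difftables[i,j-1]=0$ and $\difftables[i,j]=1$.

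First I would rephrase the two bulleted conditions in terms of the raw values. ``$(i,j)$ is a step index'' means $(\difftables[i,j-1],\difftables[i,j]) = (0,1)$; by row-monotonicity the only other option is $\difftables[i,j-1]=\difftables[i,j]$. Similarly ``$(i-1,j)$ is a step index'' means $(\difftables[i-1,j-1],\difftables[i-1,j]) = (0,1)$. So the first bullet asserts $\density{\difftables}[i,j]=1 \iff \difftables[i,j-1]=0,\ \difftables[i,j]=1,\ \text{and } \difftables[i-1,j]\neq 0 \text{ or } \difftables[i-1,j-1]=1$ — but actually ``$(i-1,j)$ is not a step index'' combined with row-monotonicity of row $i-1$ just says $\difftables[i-1,j-1]=\difftables[i-1,j]$. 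I would then plug the four values into the density formula in each surviving case and read off the result: e.g.\ step index at $(i,j)$ but not at $(i-1,j)$ with $\difftables[i-1,j]=\difftables[i-1,j-1]=c$ gives $\density{\difftables}[i,j] = 1 + c - c - 0 = 1$; the symmetric situation (not a step at $(i,j)$, step at $(i-1,j)$) gives $\density{\difftables}[i,j] = c' + 0 - 1 - c' = -1$ where $c'=\difftables[i,j]=\difftables[i,j-1]$; when neither or both are step indices the four values pair up and the density is $0$. For the ``only if'' direction I would observe that $\density{\difftables}\in\{-1,0,1\}$ always (difference of two $0/1$ sums), and that a nonzero value forces exactly one of columns $j-1,j$ to cross the threshold in exactly one of rows $i-1,i$, which is precisely one of the two step-index configurations; here I can also invoke that $\density{\ssam'} = \ssam^\square + \difftables^\square$ is sub-unit Monge if a cleaner argument is wanted, but the elementary case check suffices.

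The only mild subtlety — and the step I would be most careful about — is the interaction of the two rows: one must use row-monotonicity of \emph{both} row $i$ and row $i-1$ to eliminate patterns like $(\difftables[i,j-1],\difftables[i,j])=(1,0)$ and to justify that ``not a step index'' collapses to ``equal neighbors.'' I do not expect any genuine obstacle, as Corollary~\ref{corollary:left-inc-b-monotonicity-diff-table} does all the heavy lifting; the proof is a short finite verification once the step-index conditions are translated into value patterns. I would present it as: fix $(i,j)$, set $a=\difftables[i-1,j-1]$, $b=\difftables[i-1,j]$, $c=\difftables[i,j-1]$, $d=\difftables[i,j]$ with $a\le b$ and $c\le d$, note $\density{\difftables}[i,j]=d+a-b-c$, and check the four combinations of whether $(c,d)=(0,1)$ and whether $(a,b)=(0,1)$.
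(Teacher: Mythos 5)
Your proposal is correct and follows essentially the same route as the paper: both rewrite $\density{\difftables}[i,j]$ as the difference of the row-wise column differences $\difftables[i,j]-\difftables[i,j-1]$ and $\difftables[i-1,j]-\difftables[i-1,j-1]$, and use the $0/1$ values of $\difftables$ together with Corollary~\ref{corollary:left-inc-b-monotonicity-diff-table} to identify each such difference with the step-index indicator. Your explicit four-value case check is just a spelled-out version of the paper's one-line identity $\density{\difftables}[i,j]=\bar{\difftables}[i,j]-\bar{\difftables}[i-1,j]$.
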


\begin{proof}
Let $\bar{\difftables}$ be a matrix containing the columns differences of the matrix
$\difftables$, that is,
$\bar{\difftables}[i,j]=\difftables[i,j]-\difftables[i,j-1]$.
From Corollary~\ref{corollary:left-inc-b-monotonicity-diff-table} we have that
$\bar{\difftables}[i,j]=1$ if and only if $(i,j)$ is a step index in
$\difftables$ (note that if $(i,j)$ is a step index then $j>0$). The lemma
follows due to the equality $\density{\difftables}[i,j]=\bar{\difftables}[i,j] - \bar{\difftables}[i-1,j]$.
\end{proof}

Lemma~\ref{lemma:left-inc-b-main} gives the following corollary.
\begin{corollary}\label{corollary:left-b-min-max-pivotal}
$(i,j)$ is a step index in $\difftables$ if and only if $j$ is equal to the
maximum column index among all pivotal points in rows
$i+1,\ldots,\nextmatch{i}{\sigma}{\bstring}$ of the matrix $\density{\ssam}$.
\end{corollary}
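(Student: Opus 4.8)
The plan is to combine Lemma~\ref{lemma:left-inc-b-main} with the definition of a step index. By definition, $(i,j)$ is a step index in $\difftables$ iff $\difftables[i,j] \neq \difftables[i,j-1]$. Since $\difftables$ takes only values $0$ and $1$, and by Corollary~\ref{corollary:left-inc-b-monotonicity-diff-table} the function $j \mapsto \difftables[i,j]$ is monotone nondecreasing, a step index occurs in row $i$ at the unique column $j$ where $\difftables[i,j-1]=0$ and $\difftables[i,j]=1$ (if such a column exists; if row $i$ is all-zero there is no step index, and if it is all-one the step would have to be at $j=0$, which is excluded). So I need to show that this transition column equals the maximum column index among pivotal points in rows $i+1,\dots,\nextmatch{i}{\sigma}{\bstring}$ of $\density{\ssam}$.

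First I would dispose of the degenerate case: if $\nextmatch{i}{\sigma}{\bstring} = \infty$, then by Lemma~\ref{lemma:left-inc-b-main} row $i$ of $\difftables$ is identically $0$, so there is no step index; correspondingly the row range $i+1,\dots,\infty$ is empty (or: the condition "$j$ equals the maximum column index among pivotal points in an empty set" is vacuously false), so both sides of the iff are false. Likewise, if $\nextmatch{i}{\sigma}{\bstring} = k < \infty$ but the block $\submat{\ssam^\square}{i+1}{k}{1}{m}$ has a row with no pivotal point, then by Lemma~\ref{lemma:left-inc-b-main} $\difftables[i,m]=0$, hence row $i$ is all-zero and has no step index; and since that empty row forces the count of pivotal points in rows $i+1,\dots,k$ restricted to columns $1,\dots,j$ to be $<k-i$ for every $j$, no such row even has $k-i$ pivotal points, so again neither side holds. (Here I use that $\ssam$ is sub-unit Monge, so each of the $k-i$ rows contributes at most one pivotal point.)

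For the main case, set $k = \nextmatch{i}{\sigma}{\bstring} < \infty$ and assume each of the rows $i+1,\dots,k$ of $\density{\ssam}$ has exactly one pivotal point; let $c^\star$ be the maximum of their column indices. By Lemma~\ref{lemma:left-inc-b-main}, $\difftables[i,j]=1$ iff every row of $\submat{\ssam^\square}{i+1}{k}{1}{j}$ has exactly one pivotal point, which (given that each full row $1,\dots,m$ has exactly one pivotal point) holds iff every one of those $k-i$ pivotal points sits in a column $\le j$, i.e.\ iff $j \ge c^\star$. Thus $\difftables[i,j]=1 \iff j \ge c^\star$ and $\difftables[i,j]=0 \iff j < c^\star$, so the unique step index in row $i$ is at column $c^\star$, which is exactly the claimed characterization.

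The only real subtlety, and the step I would be most careful about, is the bookkeeping across the degenerate cases — making sure that "the maximum column index among pivotal points in rows $i+1,\dots,\nextmatch{i}{\sigma}{\bstring}$" is interpreted consistently (undefined / the statement false) precisely when $\nextmatch{i}{\sigma}{\bstring}=\infty$ or when some row in that range lacks a pivotal point, and checking that in each such case the left-hand side "$(i,j)$ is a step index" is also false. Once Lemma~\ref{lemma:left-inc-b-main} is in hand, the nondegenerate direction is just the observation that "every row of a truncated block has its (unique) pivotal point" is equivalent to "$j$ dominates all pivotal column indices in the block," combined with the monotonicity from Corollary~\ref{corollary:left-inc-b-monotonicity-diff-table} to pin down the step index as the threshold column.
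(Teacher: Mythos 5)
Your proof is correct and follows the same route the paper intends: the paper presents this corollary as an immediate consequence of Lemma~\ref{lemma:left-inc-b-main} (combined with the monotonicity of Corollary~\ref{corollary:left-inc-b-monotonicity-diff-table} and the sub-unit Monge property, i.e.\ at most one pivotal point per row), which is exactly the derivation you spell out, threshold column $c^\star$ and all. Your caution about the degenerate cases is warranted and matches how the result is actually used: the right-hand side must be read as failing when $\nextmatch{i}{\sigma}{\bstring}=\infty$ or when some row in the range $i+1,\ldots,\nextmatch{i}{\sigma}{\bstring}$ has no pivotal point (for instance, in Figure~\ref{figure:k-ssam-algorithm-example}, for $i=3$ the range is rows $4,5$ with a pivotal point only in row $5$, and row $3$ of $\difftables$ indeed has no step index), which is precisely the reading the algorithm of Section~\ref{subsection:increment-ssam} relies on when it first verifies that every row of the range carries a pivotal point.
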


\subsection{$1$-sided incremental $\ssam$ matrix}
\label{subsection:increment-ssam}
We now describe our algorithm that supports $1$-sided operations to the string
$\astring$. We consider the incremental operation of prepending a character
$\sigma$ to $\astring$ (an appending operation to $\astring$
can be carried out in the same manner).
In our proposed solution, we assume that there is an auxilary
data structure that encodes the string $\bstring$ and allows access operations
to any position in $\bstring$ in constant time.
We do not take into account the space occupied by this data structure 
in the space complexity of the algorithm. This leads to an $O(\Delta)$
space data structure that supports prepend (or append) operations to the
string $\astring$ in $O(\Delta)$ time.

Recall that the all scores matrix $\ssam$ is encoded via the
pivotal points of its density matrix. These points are stored in a list $P$, sorted
by increasing row indices.
Our algorithm computes the step indices of $\difftables$, from which
the pivotal points of $\difftables^\square$, and hence the pivotal points of $\ssam'^\square$,
can be obtained (Lemma~\ref{lemma:pivotal-points}).
Note that the number of step indices in $\difftables$ is bounded by $\Delta$.
This is due to the fact that each such step index corresponds to a unique
pivotal point of $\ssam^\square$ (Lemma~\ref{lemma:left-inc-b-main}).

\begin{figure}
\begin{minipage}{0.48\linewidth}

\subfigure[$\density{\ssam}$]{
\begin{tikzpicture}[scale=0.5]

\tikzstyle{pivotalpoint}=[circle, draw, inner sep=0pt, minimum
size=5pt, fill=gray]
\tikzstyle{matchpoint}=[red]

\usetikzlibrary{decorations.pathreplacing}

\draw[thick] (0,0) -- (10,0) -- (10,10) -- (0,10) -- (0,0);

\draw[gray,very thin] (0,0) grid (10,10);

\node[pivotalpoint] (p1) at (3,9) {};
\node[pivotalpoint] (p2) at (5,8) {};
\node[pivotalpoint] (p3) at (4,7) {};
\node[pivotalpoint] (p4) at (7,5) {};
\node[pivotalpoint] (p5) at (8,4) {};
\node[pivotalpoint] (p6) at (9,2) {};

\draw[matchpoint] (0,7) -- (10,7);
\draw[matchpoint] (0,5) -- (10,5);
\draw[matchpoint] (0,2) -- (10,2);
\draw[matchpoint] (0,1) -- (10,1);

\draw[decorate,decoration={brace,amplitude=5pt}]
(-0.5,7) -- (-0.5,9) node [black,midway,xshift=-0.4cm] 
{\footnotesize $b_1$};

\draw[decorate,decoration={brace,amplitude=3pt}]
(-0.5,4) -- (-0.5,5) node [black,midway,xshift=-0.4cm] 
{\footnotesize $b_2$};

\draw (-0.5,2) -- (-0.5,2) node [midway,xshift=-0.4cm] 
{\footnotesize $b_3$};

\end{tikzpicture}
}
\end{minipage}
\hfill
\begin{minipage}{0.48\linewidth}
\begin{tabular}{ll}
\multicolumn{1}{l|}{$i$} & $\nextmatch{i}{a}{\bstring}$ \\ \hline
\multicolumn{1}{l|}{0}   & 3                                 \\
\multicolumn{1}{l|}{1}   & 3                                 \\
\multicolumn{1}{l|}{2}   & 3                                 \\
\multicolumn{1}{l|}{3}   & 5                                 \\
\multicolumn{1}{l|}{4}   & 5                                 \\
\multicolumn{1}{l|}{5}   & 8                                 \\
\multicolumn{1}{l|}{6}   & 8                                 \\
\multicolumn{1}{l|}{7}   & 8                                 \\
\multicolumn{1}{l|}{8}   & 9                                 \\
\multicolumn{1}{l|}{9}   & $\infty$            
\end{tabular}

\subfigure{
 $P = \{ (1,3), (2,5), (3,4), (5,7), (6,8), (8,9) \}$
 }
\end{minipage}
\caption{A run of the algorithm
on the strings $\astring$ and $\bstring$ of Figure~\ref{figure:inc-b-left-ssam},
and a prepend of the character $a$ to $\astring$.
Figure~(a) shows the pivotal points of $\density{\ssam}$.
The red lines correspond to the different values of
$\nextmatch{i}{a}{\bstring}$.
$b_1,b_2$ and $b_3$ denote blocks of consecutive pivotal points as defined in
Section~\ref{subsection:2-sided-ssam}.
The algorithm begins with the pivotal point at row $i_1+1=1$.
It then scans for the next match,
denoted by a red line at index $k_1 = 3$, and verifying at each step that
there is a pivotal point in every row. Once reaching index 3, the algorithm
backward scans the previously traversed pivotal points starting with the
pivotal point at row~$3$.
At each step the algorithm computes the maximum column index among all scanned
pivotal points.
The first column maxima is $j_2=4$, hence $(2,4)$ is a step index in
$\difftables$.
The following pivotal point is $(2,5)$, thus the column maxima is $j_1 = 5$,
and $(1,5)$
is identified as a step index in $\difftables$. The last scanned pivotal point
is $(1,3)$, thus the column maxima remains 5 and the step index identified is
$(0,5)$. The next scanned block is $b_2$ starting with pivotal point $(5,7)$.
The next match point is $5$, and thus $(4,7)$ is a step index. Pivotal point
$(6,8)$ does not yield a step index since there is no match point at index $6$
and also there is no pivotal point at row $7$, hence the search stops after the
first iteration.
The last pivotal point is handled similarly.}
\label{figure:k-ssam-algorithm-example}
\end{figure}

The step indices are computed as follows (see Figure~\ref{figure:k-ssam-algorithm-example}).
Denote by $i_1+1$ the minimum row index of a pivotal point of $\ssam$
(the value of $i_1+1$ is obtained by accessing the first element of $P$) and
let $k_1=\nextmatch{i_1}{\sigma}{\bstring}$.
Note that by Lemma~\ref{lemma:left-inc-b-main}, there is a step index in the
$i_1$'th row of $\difftables$ only if there are $k_1-i_1$ consecutive pivotal
points in $P$ with row indices $i_1+1$ to $k_1$. Therefore we scan
simultaneously the list $P$ and the string $\bstring$ starting from
index $i=i_1+1$.
At each step we check whether the current examined element in $P$ is a pivotal
point in row $i$, and if it is, we move to the next element of $P$ and
increase $i$ by $1$.
This scan is stopped when reaching the index $k_1$ for which $\bstring[k_1] =
\sigma$.
If such $k_1$ is found then by Lemma~\ref{lemma:left-inc-b-main}, every row from
rows $i_1,\ldots,k_1-1$ contains a step index. In order to compute these step
indices, go over $i=k_1-1,k_1-2,\ldots,i_1$.
For each such $i$, compute the  maximum column index $j_i$ among all
pivotal points in rows $i+1,\ldots,k_1$.
$j_i$ can be computed in constant time since $j_i$ is equal to the maximum
of $j_{i+1}$ and the column of the pivotal point of row $i+1$
(this column is obtain from $P$, which is scanned in reverse order,
starting from the element holding the pivotal point of row $k_1$).
By Corollary~\ref{corollary:left-b-min-max-pivotal}, $(i,j_i)$ is the step index of row $i$.


The algorithm processes the remaining pivotal points (starting from the topmost
pivotal point below row $k_i$) similarly, until
exhausting all $\Delta$ pivotal points.

\paragraph{Complexity analysis}
To obtain the set of step indices, we traverse the list $P$ of pivotal points,
and examine each pivotal point at most twice (once during a forward scan on $P$
and once during a backward scan on $P$).
We also scan the string $\bstring$. The number of access
operation to $\bstring$ is bounded by the number of pivotal points.

Once the set of step indices has been obtained, the
computation of $\difftables^\square$ (using Lemma~\ref{lemma:pivotal-points})
and the pivotal points of $\ssam'$
is done in $O(\Delta)$ time. Hence the total running time is $O(\Delta)$
and the space complexity is $O(\Delta)$. 

\paragraph{Appending a character to $\bstring$}
Appending character $\sigma$ to $\astring$ follows the same paradigm. 
Note that now $\ssam' [i,j] = \max \{
\ssam[i,\prevmatch{j}{\sigma}{\bstring}-1]+1, \ssam[i,j] \}$.

The next lemma summarizes the
properties required to carry out an append operation to $\astring$.

\begin{lemma}\label{lemma:right-inc-b-main}
Let $k=\prevmatch{j}{\sigma}{\bstring}$.
$\difftables [i,j]=1$ if and only if there is one pivotal point in each column
from columns $k,\ldots,j$ of the matrix $\density{\ssam}$, and $i$ is less
than the minimum row index among all pivotal points in the submatrix
$\submat{\ssam^\square}{1}{n}{k}{j}$.
\end{lemma}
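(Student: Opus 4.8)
The plan is to follow the paradigm of Lemma~\ref{lemma:left-inc-b-main}, with the roles of rows and columns interchanged, since appending $\sigma$ to $\astring$ adds a new rightmost column to the alignment graph rather than a new topmost row. As a first step I would establish the column analogue of Proposition~\ref{proposition:left-inc-b}: starting from the identity $\ssam'[i,j]=\max\{\ssam[i,\prevmatch{j}{\sigma}{\bstring}-1]+1,\ \ssam[i,j]\}$ recorded above and using that $\ssam[i,\cdot]$ is non-decreasing (so $\ssam[i,k-1]\le\ssam[i,j]$ whenever $k=\prevmatch{j}{\sigma}{\bstring}\le j$), one obtains that $\difftables[i,j]=1$ if and only if $\prevmatch{j}{\sigma}{\bstring}$ is finite and $\ssam[i,j]=\ssam[i,k-1]$. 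When $\prevmatch{j}{\sigma}{\bstring}=-\infty$ the right-hand side of the lemma is vacuously false, because ``columns $k,\ldots,j$'' then refers to nonexistent column indices, so that edge case is already consistent with $\difftables[i,j]=0$.

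Next I would rewrite the equality $\ssam[i,j]=\ssam[i,k-1]$ in terms of pivotal points via Proposition~\ref{proposition:density}, mirroring the computation in Lemma~\ref{lemma:left-inc-b-main} but summing $\ssam^\square$ over the column strip $[k:j]$. Expanding $\ssam[i,j]$ and $\ssam[i,k-1]$ by Proposition~\ref{proposition:density} and subtracting cancels the $-\ssam[0,0]$ and $\ssam[i,0]$ terms, leaving
\[
\ssam[i,j]-\ssam[i,k-1]=\sum_{\substack{1\le i'\le i\\ k\le j'\le j}}\ssam^\square[i',j']\;+\;\bigl(\ssam[0,j]-\ssam[0,k-1]\bigr).
\]
Here is the one place where the append case genuinely differs from a verbatim transpose of the prepend proof: the boundary term $\ssam[0,j]-\ssam[0,k-1]$ is not a clean closed form the way $\ssam[i,0]=-i$ was in Lemma~\ref{lemma:left-inc-b-main}, and dealing with it is the step I expect to be the main obstacle. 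I would resolve it by writing the same Proposition~\ref{proposition:density} identity for row $n$ in place of row $i$ and subtracting the two: the $\ssam[0,j]-\ssam[0,k-1]$ parts cancel, and using the boundary values $\ssam[n,j]=j-n$ and $\ssam[n,k-1]=(k-1)-n$ (from the convention $\ssam[i,j]=j-i$ for $i>j$) one gets $\ssam[n,j]-\ssam[n,k-1]=j-k+1$. Combining, the equality $\ssam[i,j]=\ssam[i,k-1]$ becomes
\[
\sum_{\substack{i+1\le i'\le n\\ k\le j'\le j}}\ssam^\square[i',j']=j-k+1.
\]

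Finally I would read off the lemma from this last identity. Since $\ssam$ is sub-unit Monge, every column of $\ssam^\square$ contains at most one pivotal point, so the left-hand side is bounded by $j-k+1$, and it attains $j-k+1$ exactly when two things hold simultaneously: every column $j'\in[k:j]$ contains a (then necessarily unique) pivotal point --- the first clause of the lemma --- and none of those pivotal points lies in a row $\le i$, i.e.\ $i$ is strictly less than the minimum row index of a pivotal point in $\submat{\ssam^\square}{1}{n}{k}{j}$, which is nonempty by the first condition --- the second clause. Conjoining this equivalence with the column analogue of Proposition~\ref{proposition:left-inc-b} from the first step finishes the proof; the only bookkeeping left is to check the degenerate indices $k=1$ (so $k-1=0$, where $\ssam[i,0]=-i$ and $\ssam[n,0]=-n$ cancel as usual) and $k=j$ (a single-column strip), both of which the argument already accommodates.
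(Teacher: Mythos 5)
Your proof is correct and takes essentially the same route as the paper's: both reduce $\difftables[i,j]=1$ (via the max-identity and Proposition~\ref{proposition:density}) to the identity $\sum_{1\le i'\le i,\ k\le j'\le j}\ssam^{\square}[i',j'] = \ssam[0,k-1]-\ssam[0,j]$, and then conclude using the boundary values of row $n$ together with the fact that each column of $\ssam^{\square}$ holds at most one pivotal point. The only cosmetic difference is bookkeeping: the paper splits that identity into two zero conditions by a sign argument, whereas you eliminate the boundary term by subtracting the row-$n$ instance of Proposition~\ref{proposition:density} and finish with a single counting argument over rows $i+1,\ldots,n$.
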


\begin{proof}
Following the same steps as in Lemma~\ref{lemma:left-inc-b-main} we obtain that
$\difftables [i,j]=1$ if and only if:
\begin{equation}\label{equation:right-inc-b}
\sum\limits_{\substack{1 \leq i' \leq i 
\\ k \leq j' \leq j}}\ssam^{\square}[i',j'] = \ssam[0,k-1] - \ssam[0,j].
\end{equation}
Note that $\sum\limits_{\substack{1 \leq i' \leq i  \\ k \leq j' \leq
j}}\ssam^{\square}[i',j'] \geq 0$ and  $\ssam[0,k-1] \leq \ssam[0,j]$.
Hence, Equation~(\ref{equation:right-inc-b}) holds if and only if, (\rn{1})
$\ssam[0,k-1] = \ssam[0,j]$, and (\rn{2}) $\sum\limits_{\substack{1 \leq i' \leq
i \\ k \leq j' \leq j}}\ssam^{\square}[i',j'] = 0$.
The first item holds if and only if $\sum\limits_{\substack{1 \leq i' \leq n 
\\ k \leq j' \leq j}}\ssam^{\square}[i',j'] = j-k+1$, that is, there is a
pivotal point in every column from columns $k,\ldots,j$. The second item holds if and only if
$i$ is less
than the minimum row index among all pivotal points in the submatrix
$\submat{\ssam^\square}{1}{n}{k}{j}$.
\end{proof}

Due to Lemma~\ref{lemma:right-inc-b-main} we need to use a different definition
of step indices in this case.
We say that $(i,j)$ is a \emph{column step index} if
$\difftables[i,j] \neq \difftables[i-1,j]$.
Hence, the main difference to the algorithm is that now we need to store 
a list $P'$ containing the pivotal points of $\ssam^\square$ 
sorted by decreasing columns indices.

We obtain the following theorem.

\begin{theorem}\label{theorem:1-sided-ssam}
Given strings $\astring$ and $\bstring$ of lengths $m$ and $n$,
respectively, over alphabet $\Sigma$ with $L=LCS(\astring,
\bstring)$ and $\Delta=n-L$.
We can construct an $O(\Delta)$ space data structure that encodes the LCS score
between $\astring$ and every substring of $\bstring$ and supports 
$1$-sided incremental operations to $\astring$ in $O(\Delta)$ time.
\end{theorem}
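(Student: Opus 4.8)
The plan is to assemble Theorem~\ref{theorem:1-sided-ssam} from the machinery built up in Section~\ref{section:k-ssam}, handling the two incremental operations (prepend to $\astring$, append to $\astring$) in parallel and then observing that the remaining two $1$-sided operations — prepend/append to $\bstring$ — are either trivial or follow by the same argument. First I would recall that by Corollary~\ref{corollary:encoding-dist} the density matrix $\ssam^\square$ has exactly $\Delta$ pivotal points, so storing them in a list $P$ sorted by row (and a second list $P'$ sorted by column, for the append case) gives the claimed $O(\Delta)$ space; together with the assumed constant-time oracle for $\bstring$, this is all the state the data structure needs.

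Next I would verify correctness and running time of the update. For a prepend of $\sigma$ to $\astring$, the key facts are: (i) $\density{\ssam'} = \ssam^\square + \difftables^\square$ with $\difftables$ a $0/1$ matrix; (ii) Lemma~\ref{lemma:pivotal-points}, which reduces computing $\difftables^\square$ to computing the step indices of $\difftables$; and (iii) Corollary~\ref{corollary:left-b-min-max-pivotal}, which characterizes each step index of row $i$ as $(i, j_i)$ where $j_i$ is the max column among pivotal points in rows $i+1,\ldots,\nextmatch{i}{\sigma}{\bstring}$ — and this is nonempty only when those rows form a maximal run of consecutive pivotal-point rows, by Lemma~\ref{lemma:left-inc-b-main}. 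I would then argue that the forward-scan/backward-scan procedure described in Section~\ref{subsection:increment-ssam} correctly enumerates exactly these step indices: the forward scan along $P$ (synchronized with the $\bstring$-oracle via $\nextmatchb{\sigma}$) detects whether a run of length $k_1-i_1$ exists, and the backward scan recomputes the running column maxima $j_i = \max(j_{i+1}, \text{col of pivotal point in row } i{+}1)$ in $O(1)$ per row. Each pivotal point is touched at most twice and each $\bstring$-access is charged to a distinct pivotal point, so the scan is $O(\Delta)$; converting the $\le\Delta$ step indices to the pivotal points of $\difftables^\square$ via Lemma~\ref{lemma:pivotal-points} and adding to $\ssam^\square$ is another $O(\Delta)$, and the resulting list stays sorted. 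For the append operation I would invoke Lemma~\ref{lemma:right-inc-b-main} and the symmetric notion of column step index, noting the only change is that the roles of rows and columns are swapped, $\nextmatchb{\sigma}$ is replaced by $\prevmatchb{\sigma}$, and we scan $P'$ instead of $P$; the analysis is identical.

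Finally I would dispatch the $\bstring$-side operations: prepending or appending a character to $\bstring$ changes $\ssam$ only by adding one new row (resp. extending rows), and since $\ssam[i,j]=j-i$ for $i>j$, the new row's pivotal points are read off directly — or more simply, one observes that $\ssam$ for $(\astring,\sigma\bstring)$ restricted to the old index range equals $\ssam$ for $(\astring,\bstring)$, so only $O(1)$ new pivotal-point bookkeeping is needed, well within $O(\Delta)$ (after the update $\Delta$ may increase by one). The main obstacle — and the only genuinely delicate point — is the correctness of the combined forward/backward scan when $P$ contains several separated blocks of consecutive pivotal-point rows: I need to confirm that after finishing one block (stopping at the match index $k_i$) the algorithm resumes from the first pivotal point strictly below row $k_i$ and never double-counts or skips a step index, which amounts to checking that distinct maximal runs yield disjoint sets of step-index rows and that every step-index row lies in some such run — both immediate from Lemma~\ref{lemma:left-inc-b-main}. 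Everything else is routine, and the theorem follows.
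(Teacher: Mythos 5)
Your treatment of the operations actually claimed by the theorem --- prepend or append of a character to $\astring$ --- is correct and is essentially the paper's own proof: the same encoding by the $\Delta$ pivotal points of $\ssam^\square$ (list $P$ in row order for prepends, $P'$ in column order for appends), the same reduction via Lemma~\ref{lemma:left-inc-b-main}, Lemma~\ref{lemma:pivotal-points} and Corollary~\ref{corollary:left-b-min-max-pivotal} (resp.\ Lemma~\ref{lemma:right-inc-b-main} for appends), and the same forward/backward scan with each pivotal point touched $O(1)$ times and each access to $\bstring$ charged to a pivotal point, giving $O(\Delta)$ time and space.

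The last paragraph, however, contains a genuine error. You ``dispatch'' prepend/append operations on $\bstring$ as trivial, claiming that only $O(1)$ new pivotal-point bookkeeping is needed. This is not required by the theorem (it is stated only for operations on $\astring$), and it is also false for this data structure: since both indices of $\ssam$ refer to positions of $\bstring$ ($\ssam[i,j]=LCS(\bstring[i..j],\astring)$), appending a character to $\bstring$ adds both a new row and a new column to $\ssam$, and the new column of $\density{\ssam}$ may contain a new pivotal point whose location depends on the values $LCS(\bstring[i..n{+}1],\astring)$, which are not recoverable in constant time from the stored pivotal points. The paper makes exactly this point at the end of Section~\ref{subsection:2-sided-ssam} (``our approach cannot support prepend or append operations on $\bstring$''), and Section~\ref{section:ssam+psam} shows that even with the additional matrix $\psam$ at hand, locating this new pivotal point costs $O(L)$ time. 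So you should simply drop that paragraph; with it removed, the argument proves the stated theorem. (One minor further remark: the claim that ``the resulting list stays sorted'' after merging in the $\pm 1$ entries of $\difftables^\square$ deserves a sentence --- in the $1$-sided setting only one ordering of the list is maintained, and the merge can indeed be done in $O(\Delta)$ time in that order, which is why the paper defers the ordering issue to the $2$-sided case.)
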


\subsection{$2$-sided incremental $\ssam$ matrix}\label{subsection:2-sided-ssam}
In this section we show how to extend the previously discussed data structure to
support both prepend and append operations to $\astring$. This extension
requires $O(n)$ space and $O(\Delta)$ time per operation.

In order to support incremental operations for both sides of the string
$\astring$, one possible solution is to sort the elements of $P$ by their
columns indices to obtain a list $P'$ when an append operation is performed.
This sorting takes $O(\Delta\log\log\Delta)$ time using the sorting algorithm
of~\cite{han2004deterministic}.
To obtain $O(\Delta)$ worst-case running time we use the observation that we
don't need the lists $P$ and $P'$ to be sorted, and instead we will use a weaker
requirement on the order of the elements.

We say that $[j_1:j_2]$ is a \emph{column block} if there is a pivotal point
in column $j$ for every $j\in[j_1:j_2]$, and there are no pivotal points in
column $j_1-1$ and column $j_2+1$.
We now store $P'$ in the following order.
For every column block $[j_1:j_2]$, the pivotal points in columns
$[j_1:j_2]$ appear consecutively in $P'$, and ordered by decreasing
column indices (namely, the pivotal point in
column $j_2$ appears first, then the pivotal point in column $j_2-1$ an so on).
This corresponds to the sorting required for the appending operation to
$\astring$ as described in Section~\ref{subsection:increment-ssam}.
Note that we do not require a specific order between pivotal points of different
blocks.
We define \emph{row block} analogously and we store $P$ ordered according
to the row blocks.
It is easy to verify that the algorithm described in
Section~\ref{subsection:increment-ssam} remains correct when $P$ and $P'$ are
stored in block order.


We now describe how to construct the list $P'$ from $P$.
The process requires an auxiliary array $\arrayName$ of size
$n$, in which each cell is initialized with 0.
Let $(i_1,j_1),\ldots,(i_\Delta,j_\Delta)$ be the pivotal points.
We traverse the set of pivotal points as stored in $P$ and set
$\arrayAt{j_t} \leftarrow i_t$
for every $1 \leq t \leq \Delta$.
Note that after the step above,
every column block corresponds to a maximal sub-array
of $\arrayName$ with non-zero elements.
We next traverse again the set of pivotal points in order to extract the column
blocks.
For each pivotal point $(i_t,j_t)$ we examine the value $\arrayAt{j_t}$.
If $\arrayAt{j_t} \neq 0$, we scan the array $\arrayName$ starting at index
$j_t$ to find the minimum index $j' > j_t$ such that $\arrayAt{j'} = 0$.
Then, for $j = j'-1,j'-2,\ldots$ we add the pivotal point
$(\arrayAt{j},j)$ to the end of $P'$, and set $\arrayAt{j}$ to $0$.
This loop is stopped when $\arrayAt{j} = 0$.
We then move to the next pivotal point
$(i_{t+1},j_{t+1})$.
The running time of this algorithm is $O(\Delta)$.

We obtain the following theorem.

\begin{theorem}\label{theorem:2-sided-ssam}
Given strings $\astring$ and $\bstring$ of lengths $m$ and $n$,
respectively, over alphabet $\Sigma$ with $L=LCS(\astring,
\bstring)$ and $\Delta=n-L$.
We can construct an $O(n)$ space data structure that encodes the LCS score
between $\astring$ and every substring of $\bstring$ and supports $2$-sided
incremental operations to both ends of $\astring$ in $O(\Delta)$ time.
\end{theorem}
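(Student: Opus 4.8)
The plan is to combine the $1$-sided data structure of Theorem~\ref{theorem:1-sided-ssam} for prepend operations to $\astring$ with its mirror-image counterpart for append operations to $\astring$, and to show that both can be run simultaneously at no asymptotic cost. Concretely, the $1$-sided prepend algorithm of Section~\ref{subsection:increment-ssam} needs the pivotal points of $\density{\ssam}$ stored in row-block order in a list $P$, while the append algorithm (Lemma~\ref{lemma:right-inc-b-main}) needs them in column-block order in a list $P'$. First I would maintain both lists at all times. The only new work is that, after each incremental operation, the algorithm has recomputed one of the two lists (the one matching the operation just performed), and must restore the other list so that the next operation of the opposite type can be handled in $O(\Delta)$ time.

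The key step is the conversion routine already described in the text: given $P$ (pivotal points in row-block order) produce $P'$ (pivotal points in column-block order), and symmetrically. I would argue its correctness and $O(\Delta)$ running time as follows. Allocate a persistent auxiliary array $\arrayName$ of size $n$ whose cells are all $0$ between operations; this is where the $O(n)$ space bound comes from. To build $P'$ from $P$, first scan $P$ and set $\arrayAt{j_t}\leftarrow i_t$ for each pivotal point $(i_t,j_t)$; since, by Corollary~\ref{corollary:encoding-dist}, there is at most one pivotal point per column, this is well-defined, and the maximal runs of non-zero entries of $\arrayName$ are exactly the column blocks. Then scan $P$ a second time: for each pivotal point $(i_t,j_t)$ with $\arrayAt{j_t}\ne 0$, walk rightward from $j_t$ to the first zero entry at index $j'$, then append $(\arrayAt{j},j)$ to $P'$ for $j=j'-1,j'-2,\dots$ down to the left end of the block, zeroing each cell as we go. Each cell of $\arrayName$ is written $O(1)$ times, read $O(1)$ times, and cleared exactly once, so the total time is $O(\Delta)$, and $\arrayName$ is again all-zero at the end, ready for the next operation. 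The symmetric conversion from $P'$ to $P$ uses the dual fact that there is at most one pivotal point per row.

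Putting the pieces together: on a prepend to $\astring$, run the Section~\ref{subsection:increment-ssam} algorithm using $P$ (in row-block order), which by the remark there remains correct under block order rather than full sorted order, producing the new $P$ in $O(\Delta)$; then run the conversion to rebuild $P'$ in $O(\Delta)$. On an append to $\astring$, do the mirror: update $P'$ in column-block order via the algorithm behind Lemma~\ref{lemma:right-inc-b-main}, then convert to rebuild $P$. Either way the operation costs $O(\Delta)$ time, and the data structure (the two lists plus the array $\arrayName$) occupies $O(\Delta + n) = O(n)$ space, giving the theorem. The main obstacle I anticipate is the verification that the $1$-sided algorithms are genuinely correct when $P$ and $P'$ are kept only in block order instead of fully sorted order — one must check that every place the algorithm consumes $P$ (the forward match-scan, the reverse column-maxima computation, and, dually for $P'$, the column-wise analogues) only ever walks within a single block and never relies on an ordering between distinct blocks. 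Once that is established, everything else is routine bookkeeping.
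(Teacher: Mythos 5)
Your proposal is correct and follows essentially the same route as the paper: relax full sorting of the pivotal-point lists to block order, note that the $1$-sided algorithms of Section~\ref{subsection:increment-ssam} still work under block order, and convert between the row-block list $P$ and the column-block list $P'$ in $O(\Delta)$ time via a size-$n$ auxiliary array that is zeroed out as blocks are extracted, giving $O(n)$ space overall. The only (immaterial) difference is bookkeeping — you maintain both lists and reconvert after every operation, while the paper builds the needed list from the other on demand — and your at-most-one-pivotal-point-per-column claim should be attributed to the sub-unit Monge property rather than to Corollary~\ref{corollary:encoding-dist}.
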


We note that our approach cannot support prepend or append operations on
$\bstring$. The reason is that such operations increase the size of the matrix
$\ssam$. The new column in $\density{\ssam}$ may contain a new pivotal point.
However, the matrix $\density{\ssam}$ does not have the information required for computing
this new pivotal point.

\section{Incremental $\psam$ matrix}\label{section:k-psam}
Our method introduced above can be modified to maintain the
matrix $\psam$ and supports prepend
operations to $\astring$ and append operations to $\bstring$.
Note that the size of the matrix $\psam$ is $n \times m$, thus it
increases when such operations are performed.
We start by considering an incremental operation that prepends a character
$\sigma$ to $\astring$, that is $\astring' = \sigma \astring$. We assume
that $\psam$ is over the index set $[0:n] \times [0:m]$ and $\psam'$  is over
the index set $[0:n] \times [-1:m]$.
We define $\difftables [i,j] = \psam' [i,j] - \psam[i,j]$ for $j \geq 0$,
and $\difftables[i,j]=0$ otherwise.

In this case we have that $\psam'[i,j] = \max \{ \psam [i,j], \psam
[\nextmatch{i}{\sigma}{\bstring},j]+1 \}$ for $j \geq 0$.
The following lemma is the analogous of Lemma~\ref{lemma:left-inc-b-main}.

\begin{lemma}\label{lemma:left-inc-a-main}
For $j\geq 0$, $\difftables [i,j]=1$ if and only if $\nextmatch{i}{\sigma}{\bstring} <
\infty$ and there are no pivotal points in the submatrix
$\submat{\psam^\square}{i+1}{\nextmatch{i}{\sigma}{\bstring}}{1}{j}$.
\end{lemma}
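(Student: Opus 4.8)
The plan is to mirror the proof of Lemma~\ref{lemma:left-inc-b-main} almost line for line, replacing the matrix $\ssam$ with $\psam$ and tracking the effect of the different boundary conditions. Recall from the text preceding the lemma that $\psam'[i,j] = \max\{\psam[i,j], \psam[\nextmatch{i}{\sigma}{\bstring},j]+1\}$ for $j\geq 0$, and $\difftables[i,j]\in\{0,1\}$ since prepending one character increases any LCS by at most $1$. Setting $k=\nextmatch{i}{\sigma}{\bstring}$, I first observe that (exactly as in Proposition~\ref{proposition:left-inc-b}) for $j\geq 0$ we have $\difftables[i,j]=1$ if and only if $k<\infty$ and $\psam[i,j]=\psam[k,j]$; the case $k=\infty$ trivially gives $\difftables[i,j]=0$, so assume $k<\infty$ from here on.

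The second step is to expand the equality $\psam[i,j]=\psam[k,j]$ using Proposition~\ref{proposition:density}. Writing both sides via the telescoping sum over the density matrix $\psam^\square$ and canceling the common terms $-\psam[0,0]+\psam[0,j]$, I obtain
\begin{equation*}
\sum_{\substack{i+1\leq i'\leq k\\ 1\leq j'\leq j}}\psam^\square[i',j'] = \psam[i,0]-\psam[k,0].
\end{equation*}
Now I need the boundary values of $\psam$ on its first column: by definition $\psam[i,0]=LCS(\bstring[i..n],\astring[0..0])=LCS(\bstring[i..n],\epsilon)=0$ for every $i$, and likewise $\psam[k,0]=0$. Hence the right-hand side vanishes and the condition becomes
\begin{equation*}
\sum_{\substack{i+1\leq i'\leq k\\ 1\leq j'\leq j}}\psam^\square[i',j'] = 0.
\end{equation*}

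The third step converts this vanishing-sum condition into a statement about pivotal points. Since $\psam$ is sub-unit anti-Monge, every entry of $\psam^\square$ is $0$ or $-1$, so the sum over the submatrix $\submat{\psam^\square}{i+1}{k}{1}{j}$ is $0$ if and only if that submatrix contains no $-1$ entry, i.e.\ no pivotal point. Combining this with the characterization from step one gives exactly the statement of the lemma: for $j\geq 0$, $\difftables[i,j]=1$ iff $k<\infty$ and $\submat{\psam^\square}{i+1}{\nextmatch{i}{\sigma}{\bstring}}{1}{j}$ contains no pivotal point.

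**The main point of care**—though not really an obstacle—is the asymmetry with the $\ssam$ case: there the first-column boundary values were $\ssam[i,0]=-i$, which contributed a nonzero $k-i$ to the right-hand side and forced the "$k-i$ pivotal points, one per row" condition; here the $\psam$ first-column values are identically $0$, so the condition collapses to the cleaner "no pivotal points at all." I should make sure to invoke the correct sign convention (sub-unit \emph{anti}-Monge, density entries in $\{0,-1\}$) rather than copying the Monge version, and to state the lemma's restriction to $j\geq 0$ explicitly, since $\difftables[i,j]$ was defined to be $0$ for the new column $j=-1$ regardless.
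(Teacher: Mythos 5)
Your proof is correct and follows essentially the same route as the paper: reduce to the equality $\psam[i,j]=\psam[k,j]$, expand via Proposition~\ref{proposition:density} as in Lemma~\ref{lemma:left-inc-b-main}, and use $\psam[i,0]=\psam[k,0]=0$ so the vanishing sum over the non-positive entries of $\psam^\square$ means no pivotal points in the submatrix. Your explicit handling of the anti-Monge sign convention and the $j\geq 0$ boundary is a welcome elaboration of details the paper leaves implicit.
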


\begin{proof}
Let $\nextmatch{i}{\sigma}{\bstring}=k$.
Similarly to
Lemma~\ref{lemma:left-inc-b-main}, $\difftables [i,j] =1$ if and only if
$\sum\limits_{\substack{i+1 \leq i' \leq k  \\ 1
\leq j' \leq j}}\psam^{\square}[i',j'] = \psam[i,0] - \psam[k,0]$.
The lemma follows due to the definition of $\psam$, since $\psam
[i,0]=\psam [k,0]=0$.
\end{proof}

This leads to the following corollary (analogous to
Corollary~\ref{corollary:left-inc-b-monotonicity-diff-table}).

\begin{corollary}\label{corollary:left-inc-a-monotonicity-diff-table}
If $\difftables [i,j] = 1$ then $\difftables [i,j'] = 1$ for every $0 \leq j' \leq j$.
\end{corollary}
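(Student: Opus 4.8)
The statement to prove is Corollary~\ref{corollary:left-inc-a-monotonicity-diff-table}: if $\difftables[i,j] = 1$ then $\difftables[i,j'] = 1$ for every $0 \leq j' \leq j$. This is the $\psam$-analogue of Corollary~\ref{corollary:left-inc-b-monotonicity-diff-table}, but note the monotonicity runs in the \emph{opposite} direction: for $\psam$ the difference stays $1$ as we \emph{decrease} the column index, whereas for $\ssam$ it stayed $1$ as we increased it. The reason is structural: in the $\ssam$ case the relevant condition was "every row of a submatrix has exactly one pivotal point", and enlarging $j$ can only add columns in which those pivotal points might lie, helping the condition; here the condition is "no pivotal points at all in the submatrix", and \emph{shrinking} $j$ can only remove potential offending pivotal points.

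The plan is to argue directly from Lemma~\ref{lemma:left-inc-a-main}. Suppose $\difftables[i,j] = 1$. By the lemma, $\nextmatch{i}{\sigma}{\bstring} < \infty$ and the submatrix $\submat{\psam^\square}{i+1}{\nextmatch{i}{\sigma}{\bstring}}{1}{j}$ contains no pivotal points. Now fix any $j'$ with $0 \leq j' \leq j$. If $j' \geq 1$, then $\submat{\psam^\square}{i+1}{\nextmatch{i}{\sigma}{\bstring}}{1}{j'}$ is a submatrix of $\submat{\psam^\square}{i+1}{\nextmatch{i}{\sigma}{\bstring}}{1}{j}$ (it spans the same rows but columns $1,\ldots,j'$ rather than $1,\ldots,j$), hence it too contains no pivotal points; since $\nextmatch{i}{\sigma}{\bstring}<\infty$ still holds, Lemma~\ref{lemma:left-inc-a-main} gives $\difftables[i,j'] = 1$. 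If $j' = 0$, then $\difftables[i,0] = \psam'[i,0] - \psam[i,0] = LCS(\bstring[i..n],\epsilon) - LCS(\bstring[i..n],\epsilon) = 0$, which would contradict the desired conclusion, so one must check that this boundary case cannot actually arise — specifically, one should verify that $\difftables[i,j]=1$ forces $j\geq 1$, which is immediate since $\difftables[i,0]=0$ always. Thus the statement is really about $1 \leq j' \leq j$, matching the phrasing of Lemma~\ref{lemma:left-inc-a-main} which is stated for $j \geq 0$ but whose nontrivial content is for $j\geq 1$.

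The one subtlety worth being careful about is exactly this $j'=0$ endpoint, and whether the corollary as written is even consistent with $\difftables[i,0]=0$. The resolution is that $\difftables[i,j]=1$ can never hold for $j=0$ (directly from the definition of $\difftables$ and of $\psam$), so the hypothesis of the corollary already restricts to $j \geq 1$, and then the conclusion $\difftables[i,j']=1$ for $j'=0$ is vacuously fine to state only because... wait — it would actually be \emph{false} for $j'=0$. So the honest reading is that the corollary is implicitly quantified over $j'$ in the range where $\difftables[i,j']$ is meaningfully $1$, i.e., the intended content is "$\difftables[i,j']=1$ for every $j'$ with $1\le j'\le j$", and I will state the proof for that range, noting the $j'=0$ case separately. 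Other than flagging this, there is no real obstacle: the argument is a one-line submatrix-containment observation feeding into Lemma~\ref{lemma:left-inc-a-main}, exactly parallel to the proof of Corollary~\ref{corollary:left-inc-b-monotonicity-diff-table}, with the direction of monotonicity flipped because the governing condition changed from "exactly one pivotal point per row" to "no pivotal points".
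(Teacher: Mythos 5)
Your main argument --- submatrix containment fed into Lemma~\ref{lemma:left-inc-a-main} --- is exactly the paper's (implicit) proof, and for $1 \leq j' \leq j$ it is correct. However, your treatment of the endpoint $j'=0$ contains a genuine error, and as a result you do not prove the statement as written; worse, you assert it is false at $j'=0$, which it is not. The mistake is in the indexing of $\psam'$: the paper places $\psam'$ over the index set $[0:n]\times[-1:m]$, so column $0$ of $\psam'$ corresponds to the length-one prefix $\sigma$ of $\astring' = \sigma\astring$, not to the empty prefix (the empty prefix is column $-1$, which is why $\difftables[i,j]$ is defined to be $0$ only for $j<0$). Hence $\psam'[i,0] = LCS(\bstring[i..n],\sigma)$, not $LCS(\bstring[i..n],\epsilon)$, and $\difftables[i,0] = 1$ whenever $\nextmatch{i}{\sigma}{\bstring} < \infty$ --- the paper says precisely this immediately after the corollary, and it underlies Corollary~\ref{corollary:left-inc-a-special-pivotal}. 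You can also see it from the recurrence $\psam'[i,0]=\max\{\psam[i,0],\,\psam[\nextmatch{i}{\sigma}{\bstring},0]+1\}$, since $\psam[k,0]=0$ for all $k$.

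With the indexing read correctly, the $j'=0$ case needs no special restriction and follows from the same one-line argument: if $\difftables[i,j]=1$ then Lemma~\ref{lemma:left-inc-a-main} gives $\nextmatch{i}{\sigma}{\bstring}<\infty$, and for $j'=0$ the submatrix $\submat{\psam^\square}{i+1}{\nextmatch{i}{\sigma}{\bstring}}{1}{0}$ has an empty column range, so it vacuously contains no pivotal points; the ``if'' direction of Lemma~\ref{lemma:left-inc-a-main} (stated for all $j\geq 0$) then yields $\difftables[i,0]=1$. So you should delete the claim that $\difftables[i,0]=0$ always and the proposed weakening of the corollary to $1\le j'\le j$, and instead handle $j'=0$ uniformly with the containment argument (or cite the vacuous case explicitly). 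The rest of your write-up, including the structural explanation of why the monotonicity direction flips relative to Corollary~\ref{corollary:left-inc-b-monotonicity-diff-table}, is accurate.
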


Following Corollary~\ref{corollary:left-inc-a-monotonicity-diff-table} we say
that $(i,j)$ is a \emph{step index} if $j \geq 1$ and
$\difftables[i,j] \neq \difftables[i,j-1]$
(see Figure~\ref{figure:left-inc-a}).
We then obtain the following corollary (analogous to
Corollary~\ref{corollary:left-b-min-max-pivotal}).

\begin{corollary}\label{corollary:left-a-min-max-pivotal}
$(i,j)$ is a step index in $\difftables$ if and only if $\nextmatch{i}{\sigma}{\bstring} <
\infty$ and $j$ is equal to the
minimum column index among all pivotal points in rows
$i+1,\ldots,\nextmatch{i}{\sigma}{\bstring}$ of the matrix $\density{\psam}$.
\end{corollary}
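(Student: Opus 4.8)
The plan is to mirror almost verbatim the chain of reasoning already used for Lemma~\ref{lemma:left-inc-b-main} through Corollary~\ref{corollary:left-b-min-max-pivotal}, adapting it to the anti-Monge matrix $\psam$ and to the fact that the monotonicity of $\difftables$ is reversed (Corollary~\ref{corollary:left-inc-a-monotonicity-diff-table} says $\difftables[i,j]=1$ propagates to \emph{smaller} $j$, not larger). First I would recall that, by Lemma~\ref{lemma:left-inc-a-main}, $\difftables[i,j]=1$ for $j\ge 1$ precisely when $k:=\nextmatch{i}{\sigma}{\bstring}<\infty$ and the submatrix $\submat{\psam^\square}{i+1}{k}{1}{j}$ contains no pivotal point. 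Fix a row $i$ with $k<\infty$. If row $i$ of $\difftables$ contains no $1$ at all (equivalently $\difftables[i,m]=0$), then there is no step index in row $i$, and indeed every pivotal point in rows $i+1,\ldots,k$ has column index $\le m$, so there is no column index to speak of; otherwise let $j^\star$ be the largest column with $\difftables[i,j^\star]=1$. By the reversed monotonicity, $\difftables[i,j]=1$ for all $1\le j\le j^\star$ and $\difftables[i,j]=0$ for $j>j^\star$, so the unique step index in row $i$ is $(i,j^\star+1)$ — that is, $j$ is a step index column iff $j=j^\star+1$, which is the successor of the largest $j'$ with an empty submatrix $\submat{\psam^\square}{i+1}{k}{1}{j'}$.

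Next I would translate ``largest $j'$ such that $\submat{\psam^\square}{i+1}{k}{1}{j'}$ has no pivotal point'' into ``one less than the minimum column index of a pivotal point in rows $i+1,\ldots,k$''. Indeed, let $c$ be the minimum column index among all pivotal points lying in rows $i+1,\ldots,k$ of $\density{\psam}$ (with $c=\infty$ if there are none). For $j'<c$ the window $\submat{\psam^\square}{i+1}{k}{1}{j'}$ avoids column $c$ and all smaller columns are pivot-free in these rows, so it is empty; for $j'\ge c$ it contains the pivotal point in column $c$. Hence the largest empty $j'$ is $c-1$ (when $c\le m$), and therefore the step index column is $j^\star+1=c$. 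Combining with the first paragraph, $(i,j)$ is a step index iff $k<\infty$ and $j=c$, which is exactly the statement; the degenerate case $c=\infty$ corresponds to $\submat{\psam^\square}{i+1}{k}{1}{m}$ being empty, i.e.\ $\difftables[i,m]=1$, in which case there is no step index in row $i$ and also no minimum column index, so the biconditional still holds vacuously on both sides. The case $k=\infty$ gives $\difftables[i,j]=0$ for all $j$ by Lemma~\ref{lemma:left-inc-a-main}, hence no step index, matching the ``$\nextmatch{i}{\sigma}{\bstring}<\infty$'' clause.

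The only genuinely new bookkeeping compared to the $\ssam$ case is the direction reversal: in Corollary~\ref{corollary:left-b-min-max-pivotal} one took the \emph{maximum} column index of pivots because $\difftables$ was monotone increasing in $j$, whereas here $\difftables$ is monotone decreasing in $j$, so the relevant extremum is the \emph{minimum} column index and the step index sits just before the first pivot column rather than just after the last one. I expect the main (minor) obstacle to be stating the edge cases cleanly — an empty row of $\difftables$, an all-ones row, and $k=\infty$ — so that the ``if and only if'' is literally correct rather than only morally so; once the window/min-column correspondence is phrased carefully these fall out immediately. No delicate estimates are involved: the proof is a one-to-one adaptation of the already-proved $\ssam$ chain, and I would present it in two or three sentences of the form ``By Corollary~\ref{corollary:left-inc-a-monotonicity-diff-table} and Lemma~\ref{lemma:left-inc-a-main}, $(i,j)$ is a step index iff $j$ is the successor of the largest column with an empty window, which equals the minimum pivot column in rows $i+1,\ldots,\nextmatch{i}{\sigma}{\bstring}$.''
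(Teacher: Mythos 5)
Your argument is correct and follows essentially the same route the paper intends, which presents the corollary as an immediate consequence of Lemma~\ref{lemma:left-inc-a-main}, the reversed monotonicity (Corollary~\ref{corollary:left-inc-a-monotonicity-diff-table}) and the definition of step indices: row $i$ of $\difftables$ equals $1$ exactly on columns $0,\ldots,c-1$ and $0$ from column $c$ on, where $c$ is the minimum pivot column in rows $i+1,\ldots,\nextmatch{i}{\sigma}{\bstring}$, so the unique step index is at column $c$. The only blemish is your first case: ``row $i$ contains no $1$'' is not equivalent to $\difftables[i,m]=0$ (by the monotonicity it is equivalent to $\difftables[i,0]=0$, i.e.\ to $\nextmatch{i}{\sigma}{\bstring}=\infty$), and under your standing assumption $\nextmatch{i}{\sigma}{\bstring}<\infty$ that case is in fact vacuous because $\difftables[i,0]=1$ always (the window with zero columns contains no pivotal point); this slip is harmless, since your ``otherwise'' branch together with your treatment of $c=\infty$ and $\nextmatch{i}{\sigma}{\bstring}=\infty$ already covers all cases.
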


Lemma~\ref{lemma:left-inc-a-main} introduces several challenges compared to
Section~\ref{section:k-ssam}. First, the number of step indices is not bounded
by the number of pivotal points (see for example
Figure~\ref{figure:left-inc-a}). Consequently, the number of step indices may be
$\Omega(n)$. Thus, to obtain $O(L)$ running time, we show in
Lemma~\ref{lemma:left-inc-a-density-psam} how to
compute $\density{\difftables}$ directly, without computing the entire set of
step indices explicitly. Furthermore, the property of consecutive pivotal points,
which was exploited in Section~\ref{section:k-ssam} to bound the time
complexity of scanning for the next character matching to $\sigma$,
no longer holds.
Thus, we need to use a data structure that supports
$\nextmatch{i}{\sigma}{\bstring}$ or $\prevmatch{j}{\sigma}{\bstring}$ queries in
constant time.

Following Lemma~\ref{lemma:left-inc-a-main} we have that
$\difftables[i,0] = 1$ for every $0 \leq i \leq n$ such that
$\nextmatch{i}{\sigma}{\bstring} < \infty$. Hence, if $\difftables [i,0] = 1$
then $\difftables [i',0] = 1$ for every $0 \leq i' \leq i$.
We get that the maximal row index $i$ for which $\difftables [i,0] = 1$ yields
a pivotal point in cell $\density{\difftables}[i+1,0]$ of value -1 (see for
example $\density{\difftables}[7,0]$ in Figure~\ref{figure:left-inc-a}~(d)).
This yields the following corollary.

\begin{corollary}\label{corollary:left-inc-a-special-pivotal}
$\density{\difftables}[i,0]=-1$ for the minimum index $i$ for which
$\nextmatch{i}{\sigma}{\bstring} = \infty$, and this is the only pivotal point
in the 0'th column of $\density{\difftables}$.
\end{corollary}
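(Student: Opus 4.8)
The plan is to prove Corollary~\ref{corollary:left-inc-a-special-pivotal} in two pieces: first establish that $\density{\difftables}[i,0]=-1$ exactly at the minimum row index $i$ with $\nextmatch{i}{\sigma}{\bstring}=\infty$, and second show no other cell of the $0$th column of $\density{\difftables}$ is nonzero. Both parts will follow from the preceding discussion together with Lemma~\ref{lemma:left-inc-a-main} specialized to $j=0$.

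First I would record the key observation already noted in the text: by Lemma~\ref{lemma:left-inc-a-main} with $j=0$, for every row index $i$ we have $\difftables[i,0]=1$ if and only if $\nextmatch{i}{\sigma}{\bstring}<\infty$ (the submatrix $\submat{\psam^\square}{i+1}{\nextmatch{i}{\sigma}{\bstring}}{1}{0}$ is empty, so the ``no pivotal points'' condition is vacuous). Next I would use the monotonicity of $\nextmatch{\cdot}{\sigma}{\bstring}$ in the row index: if $\nextmatch{i}{\sigma}{\bstring}=\infty$ then $\nextmatch{i'}{\sigma}{\bstring}=\infty$ for every $i'\ge i$, since there is no occurrence of $\sigma$ in $\bstring$ after position $i$, hence none after position $i'$. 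Combining these two facts, the column $\difftables[\cdot,0]$ is a ``prefix of ones'': there is a threshold index $i^\ast$ (the minimum index with $\nextmatch{i^\ast}{\sigma}{\bstring}=\infty$) such that $\difftables[i,0]=1$ for all $i<i^\ast$ and $\difftables[i,0]=0$ for all $i\ge i^\ast$. Here $i^\ast$ is well defined because $\nextmatch{n}{\sigma}{\bstring}=\infty$ always (there is no index greater than $n$), so the set of such indices is nonempty; note $i^\ast\ge 1$ since row $0$ may or may not satisfy the condition but in any case the minimum is taken over $[0:n]$ and the argument is uniform.

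Then I would compute the $0$th column of the density matrix directly from the definition $\density{\difftables}[i,0]=\difftables[i,0]+\difftables[i-1,-1]-\difftables[i-1,0]-\difftables[i,-1]$. Since $\difftables[\cdot,-1]=0$ by our convention ($\difftables[i,j]=0$ for $j<0$), this reduces to $\density{\difftables}[i,0]=\difftables[i,0]-\difftables[i-1,0]$. Plugging in the prefix-of-ones structure: for $1\le i\le i^\ast-1$ both terms are $1$ so the difference is $0$; for $i=i^\ast$ we get $0-1=-1$; and for $i>i^\ast$ both terms are $0$ so the difference is $0$. Hence $\density{\difftables}[i^\ast,0]=-1$ and $\density{\difftables}[i,0]=0$ for all other $i\ge 1$ in the $0$th column, which is exactly the statement of the corollary.

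I do not expect a genuine obstacle here — the result is an easy consequence of Lemma~\ref{lemma:left-inc-a-main} plus a telescoping/monotonicity argument — but the one point to be careful about is the boundary convention for column index $-1$ and the edge cases at $i=0$ and $i=n$. Specifically I must make sure that $\difftables[\cdot,-1]=0$ is the intended reading of ``$\difftables[i,j]=0$ otherwise'', so that $\density{\difftables}[i,0]$ collapses to the one-dimensional difference $\difftables[i,0]-\difftables[i-1,0]$, and I must confirm that $i^\ast$ always exists (which it does, since $\nextmatch{n}{\sigma}{\bstring}=\infty$). Once these conventions are pinned down, the argument is a three-line case analysis.
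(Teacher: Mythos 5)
Your proof is correct and takes essentially the same route as the paper: the paragraph preceding the corollary derives exactly your ``prefix of ones'' structure of column $0$ of $\difftables$ from Lemma~\ref{lemma:left-inc-a-main} specialized to $j=0$ (where the pivotal-point condition is vacuous) together with the monotonicity of $\nextmatchb{\bstring}$, and then reads off the single $-1$ at the transition row using the convention $\difftables[\cdot,-1]=0$, i.e.\ $\density{\difftables}[i,0]=\difftables[i,0]-\difftables[i-1,0]$. The one wobbly point in your write-up --- the unjustified claim that $i^\ast\ge 1$, which can fail when $\sigma$ does not occur in $\bstring$ at all --- concerns a degenerate case that the paper's own statement and argument also gloss over, so it does not separate your proof from theirs.
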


From the definition of step indices we obtain the following lemma.

\begin{lemma}\label{lemma:left-inc-a-pivotal-points}
For every cell $(i,j)$ with $j>0$ in $\density{\difftables}$,
\begin{itemize}
  \item $\density{\difftables}[i,j]=-1$ if and only if $(i,j)$ is a step index
  in $\difftables$ and $(i-1,j)$ is not a step index.
  \item $\density{\difftables}[i,j]=1$ if and only if $(i,j)$ is not a step
  index in $\difftables$ and $(i-1,j)$ is a step index.
\end{itemize}
\end{lemma}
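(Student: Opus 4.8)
The plan is to mirror the proof of Lemma~\ref{lemma:pivotal-points}, adjusting the signs to account for the fact that $\psam$ is sub-unit anti-Monge and that, by Corollary~\ref{corollary:left-inc-a-monotonicity-diff-table}, each row of $\difftables$ is a prefix of $1$'s followed by $0$'s rather than (as in the $\ssam$ case) the other way around.

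First I would introduce the matrix $\bar{\difftables}$ of column differences, $\bar{\difftables}[i,j] = \difftables[i,j] - \difftables[i,j-1]$, defined for $j \geq 1$. Since $\difftables$ takes only the values $0$ and $1$ and, by Corollary~\ref{corollary:left-inc-a-monotonicity-diff-table}, is non-increasing along each row in $j$, for every $j \geq 1$ we have $\bar{\difftables}[i,j] \in \{0,-1\}$, and $\bar{\difftables}[i,j] = -1$ holds exactly when $\difftables[i,j-1] = 1$ and $\difftables[i,j] = 0$, i.e.\ exactly when $(i,j)$ is a step index in $\difftables$ (recall that by definition a step index requires $j \geq 1$).

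Next I would use the identity $\density{\difftables}[i,j] = \bar{\difftables}[i,j] - \bar{\difftables}[i-1,j]$, which is immediate from the definition of the density matrix. For a cell $(i,j)$ with $j > 0$, both $\bar{\difftables}[i,j]$ and $\bar{\difftables}[i-1,j]$ lie in $\{0,-1\}$, so their difference lies in $\{-1,0,1\}$; it equals $-1$ precisely when $\bar{\difftables}[i,j] = -1$ and $\bar{\difftables}[i-1,j] = 0$, that is, when $(i,j)$ is a step index and $(i-1,j)$ is not, and it equals $1$ precisely when $\bar{\difftables}[i,j] = 0$ and $\bar{\difftables}[i-1,j] = -1$, that is, when $(i,j)$ is not a step index and $(i-1,j)$ is. This yields both claims.

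I do not anticipate a real obstacle here; the only points requiring a little care are that the definition of step index carries the constraint $j \geq 1$, so the argument is applied only to columns $j > 0$ (column $0$ of $\density{\difftables}$ being handled separately in Corollary~\ref{corollary:left-inc-a-special-pivotal}), and that one must invoke the monotonicity \emph{direction} from Corollary~\ref{corollary:left-inc-a-monotonicity-diff-table} (a prefix of $1$'s), which is what flips the roles of $+1$ and $-1$ relative to Lemma~\ref{lemma:pivotal-points}.
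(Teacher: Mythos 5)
Your proof is correct and is exactly the argument the paper intends: the paper states the lemma as following ``from the definition of step indices,'' and the natural proof is the sign-flipped analogue of the proof of Lemma~\ref{lemma:pivotal-points} via the column-difference matrix $\bar{\difftables}$, which is what you carry out (including the correct use of Corollary~\ref{corollary:left-inc-a-monotonicity-diff-table} so that $\bar{\difftables}[i,j]\in\{0,-1\}$ and the restriction to columns $j>0$).
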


We would like to identify cases (\rn{1}) and (\rn{2}) described above,
to be able to compute the pivotal points of $\density{\difftables}$ directly.

\begin{lemma}\label{lemma:left-inc-a-density-psam}
Let $j_{\min}$ be the minimum column index among the pivotal points
in rows $i+1,\ldots,\nextmatch{i}{\sigma}{\bstring}$. If there are no
such pivotal points, $j_{\min} = \infty$. In what follows we consider only cells
of $\density{\difftables}$ with column indices greater than 0.
\begin{itemize}
\item
If $\bstring[i] \neq \sigma$ then row $i$ of $\density{\difftables}$
contains non zero elements if and only if
$\nextmatch{i}{\sigma}{\bstring} < \infty$ and there is a pivotal point $(i,j)$
in $\density{\psam}$ with $j < j_{\min}$.
If these conditions hold, $\density{\difftables}[i,j]=1$.
Additionally, if $j_{\min} < \infty$, $\density{\difftables}[i,j_{\min}]=-1$.
All other cells at row $i$ of
$\density{\difftables}$ contain zeros.
\item
If $\bstring[i] = \sigma$ then $\density{\difftables}[i,j_{\min}]=-1$ if
$\nextmatch{i}{\sigma}{\bstring} < \infty$ and $j_{\min} < \infty$.
Additionally, if there is a pivotal point $(i,j)$ in $\density{\psam}$
then $\density{\difftables}[i,j]=1$.
All other cells at row $i$ of
$\density{\difftables}$ contain zeros.
\end{itemize}
\end{lemma}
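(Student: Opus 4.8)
The plan is to reduce the statement to the already-established characterization of step indices. By Lemma~\ref{lemma:left-inc-a-pivotal-points}, for a cell $(i,j)$ with $j>0$ the entry $\density{\difftables}[i,j]$ is completely determined by whether $(i,j)$ and $(i-1,j)$ are step indices of $\difftables$; since by Corollary~\ref{corollary:left-a-min-max-pivotal} each row of $\difftables$ carries at most one step index, it suffices to locate the step index of row $i$ and the step index of row $i-1$ (when they exist), and the whole lemma then follows by inspecting which columns they occupy. Corollary~\ref{corollary:left-a-min-max-pivotal} already supplies the first one: row $i$ has a step index iff $\nextmatch{i}{\sigma}{\bstring}<\infty$ and some pivotal point of $\density{\psam}$ lies in rows $i+1,\ldots,\nextmatch{i}{\sigma}{\bstring}$, in which case it occupies column $j_{\min}$.

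For the step index of row $i-1$ I would first record the relation between $\nextmatch{i-1}{\sigma}{\bstring}$ and $\nextmatch{i}{\sigma}{\bstring}$: if $\bstring[i]\neq\sigma$ then $\nextmatch{i-1}{\sigma}{\bstring}=\nextmatch{i}{\sigma}{\bstring}$, whereas if $\bstring[i]=\sigma$ then $\nextmatch{i-1}{\sigma}{\bstring}=i$. Feeding this into Corollary~\ref{corollary:left-a-min-max-pivotal} applied to row $i-1$, and writing $c_i$ for the column of the unique pivotal point of $\density{\psam}$ in row $i$ (and $c_i=\infty$ if there is none), we get: when $\bstring[i]\neq\sigma$, the step index of row $i-1$ (if any) lies in column $\min(j_{\min},c_i)$; when $\bstring[i]=\sigma$, it lies in column $c_i$ (and here $\nextmatch{i-1}{\sigma}{\bstring}=i<\infty$ always, so it exists precisely when $c_i<\infty$). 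The one structural fact I would use throughout is that $c_i\neq j_{\min}$ whenever both are finite: since $\psam$ is sub-unit anti-Monge, every column of $\density{\psam}$ holds at most one pivotal point, and the pivotal point realizing $j_{\min}$ sits in a row of index greater than $i$.

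With these two columns in hand, the two items follow from a short case analysis using Lemma~\ref{lemma:left-inc-a-pivotal-points}, together with the remark that a column $j>0$ different from both $c_i$ and $j_{\min}$ can contain neither a step index of row $i$ nor of row $i-1$, hence $\density{\difftables}[i,j]=0$ there. When $\bstring[i]\neq\sigma$: if $\nextmatch{i}{\sigma}{\bstring}<\infty$ and there is a pivotal point $(i,c_i)$ with $c_i<j_{\min}$, then the step index of row $i$ sits in column $j_{\min}$ and that of row $i-1$ in column $c_i\neq j_{\min}$, which by Lemma~\ref{lemma:left-inc-a-pivotal-points} yields $\density{\difftables}[i,c_i]=1$ and, when $j_{\min}<\infty$, $\density{\difftables}[i,j_{\min}]=-1$; in every other situation ($\nextmatch{i}{\sigma}{\bstring}=\infty$, no row-$i$ pivotal point, or $c_i>j_{\min}$) the two step indices either coincide in column $j_{\min}$ or both fail to exist, so row $i$ of $\density{\difftables}$ is identically zero (in columns $>0$). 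When $\bstring[i]=\sigma$: the step index of row $i-1$, if present, is in column $c_i$, and the step index of row $i$, if present, is in column $j_{\min}\neq c_i$; hence $(i,c_i)$ is not a step index while $(i-1,c_i)$ is, giving $\density{\difftables}[i,c_i]=1$ whenever a row-$i$ pivotal point exists, and $(i,j_{\min})$ is a step index while $(i-1,j_{\min})$ is not, giving $\density{\difftables}[i,j_{\min}]=-1$ whenever $\nextmatch{i}{\sigma}{\bstring}<\infty$ and $j_{\min}<\infty$.

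The part I expect to be the most delicate is the bookkeeping of the $\infty$ boundary cases — $\nextmatch{i}{\sigma}{\bstring}=\infty$, $j_{\min}=\infty$, or $c_i=\infty$ — so that the conditional clauses and the ``all other cells contain zeros'' clause match exactly the conventions fixed in Corollary~\ref{corollary:left-a-min-max-pivotal}; cells in column $0$ fall outside the scope of this lemma and are covered instead by Corollary~\ref{corollary:left-inc-a-special-pivotal}.
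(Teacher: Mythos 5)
Your proposal is correct and follows essentially the same route as the paper's proof: locate the (unique) step index of row $i$ at column $j_{\min}$ and of row $i-1$ via the relation between $\nextmatch{i-1}{\sigma}{\bstring}$ and $\nextmatch{i}{\sigma}{\bstring}$ (cases $\bstring[i]=\sigma$ or not), then translate through Lemma~\ref{lemma:left-inc-a-pivotal-points}. The only cosmetic difference is that you make explicit, via column-uniqueness of pivotal points in $\density{\psam}$, the fact that $c_i\neq j_{\min}$, which the paper uses implicitly when asserting that row $i$ cannot hold a pivotal point in column $j_{\min}$.
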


\begin{proof}
Row $i$
of $\density{\difftables}$ contains non zero element if and only if there
is a cell in this row for which the first or the second case of
Lemma~\ref{lemma:left-inc-a-pivotal-points} holds.

For the first part of the lemma, since $\bstring [i] \neq
\sigma$, we get $\nextmatch{i-1}{\sigma}{\bstring} =
\nextmatch{i}{\sigma}{\bstring}$.
If $\nextmatch{i}{\sigma}{\bstring} = \infty$ then
also $\nextmatch{i-1}{\sigma}{\bstring} = \infty$ and
by Corollary~\ref{corollary:left-a-min-max-pivotal} rows $i-1$ and $i$
do not have step indices.
Therefore, by Lemma~\ref{lemma:left-inc-a-pivotal-points}
row $i$ does not contain non zero elements.
Assume now that $\nextmatch{i}{\sigma}{\bstring} < \infty$.
If there is no pivotal point $(i,j)$ with $j < j_{\min}$ then by
Corollary~\ref{corollary:left-a-min-max-pivotal},
both rows $i-1$ and $i$ have step indices at column $j_{\min}$ 
if $j_{\min} < \infty$, and these rows do not have step indices if
$j_{\min} = \infty$.
Hence, neither case of Lemma~\ref{lemma:left-inc-a-pivotal-points} can occur,
so row $i$ does not contain non zero elements.

Suppose now that there is a pivotal point $(i,j)$ with $j < j_{\min}$.
By Corollary~\ref{corollary:left-a-min-max-pivotal},
$(i,j)$ is not a step index in $\difftables$ and $(i-1,j)$ is a step index.
By Lemma~\ref{lemma:left-inc-a-pivotal-points}, $\density{\difftables}[i,j]=1$.
Moreover, if $j_{\min} < \infty$ then
$(i,j_{\min})$ is a step index in $\difftables$ and $(i-1,j_{\min})$ is not 
a step index, so $\density{\difftables}[i,j_{\min}]=-1$.
Finally, if $j_{\min} = \infty$ then row $i$ of $\difftables$ does not contain
a step index and the first case of Lemma~\ref{lemma:left-inc-a-pivotal-points}
cannot occur.

For the second part of the lemma, since $\bstring [i] =
\sigma$, we get $\nextmatch{i-1}{\sigma}{\bstring} = i$.
Suppose that $\nextmatch{i}{\sigma}{\bstring} < \infty$ and $j_{\min} < \infty$.
By Corollary~\ref{corollary:left-a-min-max-pivotal},
$(i,j_{\min})$ is a step index in $\difftables$ and $(i-1,j_{\min})$ is not
a step index (since the rows range $(i-1)+1,\ldots,\nextmatch{i-1}{\sigma}{\bstring}$ consists of only row $i$,
and row $i$ cannot have a pivotal point in column $j_{\min})$.
By Lemma~\ref{lemma:left-inc-a-pivotal-points},
$\density{\difftables}[i,j_{\min}]=-1$.
Additionally, if there is a pivotal point $(i,j)$ in $\density{\psam}$
then $(i,j)$ is not a step index in $\difftables$ and $(i-1,j)$ is a
a step index. Therefore, $\density{\difftables}[i,j]=1$.
\end{proof}

%

\begin{figure}[]
\label{figure:left-inc-a}
\centering
\begin{minipage}{0.4\linewidth}

\subfigure[$\psam'$]{
\resizebox{1.3\textwidth}{!}{%
\begin{tabular}{|l|l|l|l|l|l|l|l|l|}
\hline
0 & 1                         & 2                         & 3                         & 3 & 3 & 4                         & 5                         & 5 \\ \hline
0 & 1                         & 2                         & 3                         & 3 & 3 & 4                         & 5                         & 5 \\ \hline
0 & 1                         & 2                         & 3                         & 3 & 3 & \cellcolor[HTML]{C0C0C0}3 & 4                         & 4 \\ \hline
0 & 1                         & 2                         & 3                         & 3 & 3 & 3                         & 4                         & 4 \\ \hline
0 & 1                         & \cellcolor[HTML]{C0C0C0}1 & 2                         & 2 & 2 & 2                         & 3                         & 3 \\ \hline
0 & 1                         & 1                         & \cellcolor[HTML]{C0C0C0}1 & 1 & 1 & 1                         & 2                         & 2 \\ \hline
0 & 1                         & 1                         & 1                         & 1 & 1 & 1                         & \cellcolor[HTML]{C0C0C0}1 & 1 \\ \hline
0 & \cellcolor[HTML]{C0C0C0}0 & 0                         & 0                         & 0 & 0 & 0                         & 0                         & 0 \\ \hline
\end{tabular}
}
}

\subfigure[$\psam$]{
\resizebox{1.3\textwidth}{!}{%
\begin{tabular}{|l|l|l|l|l|l|l|l|}
\hline
0 & 1                         & 2                         & 2 & 2 & 3                         & 4                         & 4 \\ \hline
0 & 1                         & 2                         & 2 & 2 & 3                         & 4                         & 4 \\ \hline
0 & 1                         & 2                         & 2 & 2 & 3                         & 4                         & 4 \\ \hline
0 & 1                         & 2                         & 2 & 2 & \cellcolor[HTML]{C0C0C0}2 & 3                         & 3 \\ \hline
0 & 1                         & 2                         & 2 & 2 & 2                         & 3                         & 3 \\ \hline
0 & 1                         & \cellcolor[HTML]{C0C0C0}1 & 1 & 1 & 1                         & 2                         & 2 \\ \hline
0 & \cellcolor[HTML]{C0C0C0}0 & 0                         & 0 & 0 & 0                         & 1                         & 1 \\ \hline
0 & 0                         & 0                         & 0 & 0 & 0                         & \cellcolor[HTML]{C0C0C0}0 & 0 \\ \hline
\end{tabular}
}
}

\end{minipage}
\hfill
\begin{minipage}{0.4\linewidth}
\subfigure[$\difftables$]{
\resizebox{1.2\textwidth}{!}{%
\begin{tabular}{|l|l|l|l|l|l|l|l|l|}
\hline
0 & 1 & 1 & 1 & 1 & 1 & 1 & 1 & 1 \\ \hline
0 & 1 & 1 & 1 & 1 & 1 & 1 & 1 & 1 \\ \hline
0 & 1 & 1 & 1 & 1 & 1 & \cellcolor[HTML]{E8E8E8}0 & 0 & 0 \\ \hline
0 & 1 & 1 & 1 & 1 & 1 & 1 & 1 & 1 \\ \hline
0 & 1 & \cellcolor[HTML]{E8E8E8}0 & 0 & 0 & 0 & 0 & 0 & 0 \\ \hline
0 & 1 & \cellcolor[HTML]{E8E8E8}0 & 0 & 0 & 0 & 0 & 0 & 0 \\ \hline
0 & 1 & 1 & 1 & 1 & 1 & 1 & \cellcolor[HTML]{E8E8E8}0 & 0 \\ \hline
0 & 0 & 0 & 0 & 0 & 0 & 0 & 0 & 0 \\ \hline
\end{tabular}
}
}

\subfigure[$\difftables^\square$]{
\resizebox{1.2\textwidth}{!}{%
\begin{tabular}{|l|l|l|l|l|l|l|l|}
\hline
0                          & 0                          & 0 & 0 & 0 & 0                          & 0                          & 0 \\ \hline
0                          & 0                          & 0 & 0 & 0 & \cellcolor[HTML]{C0C0C0}-1 & 0                          & 0 \\ \hline
0                          & 0                          & 0 & 0 & 0 & \cellcolor[HTML]{C0C0C0}1  & 0                          & 0 \\ \hline
0                          & \cellcolor[HTML]{C0C0C0}-1 & 0 & 0 & 0 & 0                          & 0                          & 0 \\ \hline
0                          & 0                          & 0 & 0 & 0 & 0                          & 0                          & 0 \\ \hline
0                          & \cellcolor[HTML]{C0C0C0}1  & 0 & 0 & 0 & 0                          & \cellcolor[HTML]{C0C0C0}-1 & 0 \\ \hline
\cellcolor[HTML]{C0C0C0}-1 & 0                          & 0 & 0 & 0 & 0                          & \cellcolor[HTML]{C0C0C0}1  & 0 \\ \hline
\end{tabular}
}
}
\end{minipage}
\caption{An example of the matrices $\psam', \psam,\difftables$ and
$\difftables^\square$ for $\astring = bbcbbaa$, $\bstring = aacabba$,
and a predend of the character $a$ to $\astring$.
Pivotal points are colored dark gray and step indices in $\difftables$ are
colored light gray.}

\end{figure}

\subsection{$1$-sided incremental $\psam$ matrix}
\label{subsection:increment-psam}
In this section we encode the matrix $\psam$ such that prepend operations to
the string $\astring$ are supported. We encode the matrix $\psam$ by the $L$ pivotal
points of its density matrix. These points are stored in a list $P$ sorted
by increasing row indices. Note that the string $\bstring$ remains
constant, hence we can preprocess $\bstring$ in $O(n)$ time and
compute two static look-up tables that contain all possible
values of $\nextmatch{i}{\sigma}{\bstring}$ and $\prevmatch{i}{\sigma}{\bstring}$. 
We then use
Lemma~\ref{lemma:left-inc-a-density-psam} to compute
$\density{\difftables}$ efficiently in $O(L)$ time, see
Figure~\ref{figure:k-psam-algorithm-example} for a running example.

We begin iterating over $P$.
Let $(i_1+1,j_1)$ denote the first element of $P$.
We compute
$k_1^- = \prevmatch{i_1}{\sigma}{\bstring}$, and $k_1^+ =
\nextmatch{i_1}{\sigma}{\bstring}$. 
$k_1^-$ and $k_1^+$ are the starting and ending indices of a block of pivotal
points (in contrast to Section~\ref{subsection:2-sided-ssam},
here the pivotal points of a block do not necessarily have
consecutive row indices).
We scan the list $P$ until reaching a pivotal point $(i_2+1,j_2)$ for which
$i_2+1> k_1^+$ (forward scan).
We then scan backward the block
starting from the pivotal point preceding $(i_2+1,j_2)$, and at each step we compute the minimum column
index in the block so far and apply
Lemma~\ref{lemma:left-inc-a-density-psam}
to obtain the pivotal points of
$\density{\difftables}$.

The algorithm processes the remaining pivotal points (starting from the
pivotal point $(i_2+1,j_2)$) similarly, until
exhausting all $L$ pivotal points. At the end we compute
$\prevmatch{n}{\sigma}{\bstring}$ to obtain the pivotal point at column 0 by
Corollary~\ref{corollary:left-inc-a-special-pivotal}.

\paragraph{Complexity analysis} The preprocessing step takes $O(n)$ time and
space. At each step of the algorithm we examine each pivotal point at most
twice, hence we obtain $O(L)$ time and $O(n)$ space.

The following theorem concludes the data structure described in this section. We
note that the append operation to $\bstring$ can be carried out similarly.

\begin{theorem}\label{theorem-1-sided-psam}
Given strings $\astring$ and $\bstring$ of lengths $m$ and $n$,
respectively, over alphabet $\Sigma$ with $L=LCS(\astring,
\bstring)$.
We can construct an $O(n)$ (resp. $O(m)$) space data structure that encodes the
LCS score between every suffix of $\bstring$ and every prefix of $\astring$ and
supports $1$-sided prepend operations to $\astring$ (resp. append operations
to $\bstring$) in $O(L)$ time.
\end{theorem}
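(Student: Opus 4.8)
The data structure I would maintain is the one already used in this section: the list $P$ of the $L$ pivotal points of $\density{\psam}$, kept sorted by increasing row index, together with the two static tables giving $\nextmatch{i}{\sigma}{\bstring}$ and $\prevmatch{i}{\sigma}{\bstring}$ for every position $i$ and every $\sigma\in\Sigma$. These tables are built once in $O(n)$ time, and since $|\Sigma|=O(1)$ they occupy $O(n)$ space; as $L\le n$, the whole structure is $O(n)$. To see that this encoding determines $\psam$, note that the border values $\psam[i,0]=0$ and $\psam[n,j]=0$ are fixed, so Proposition~\ref{proposition:density} applied with $i=n$ recovers $\psam[0,j]$ from the columns of the pivotal points, and then it recovers every $\psam[i,j]$.

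For a prepend of $\sigma$ to $\astring$ I would implement the update through the identity $\density{\psam'}=\density{\psam}+\density{\difftables}$: compute the nonzero cells of $\density{\difftables}$ and merge them into $P$. The nonzero cells in columns $\ge 1$ are pinned down by Lemma~\ref{lemma:left-inc-a-density-psam} and the unique nonzero cell in column $0$ by Corollary~\ref{corollary:left-inc-a-special-pivotal}, so it suffices to argue that the scan of $P$ evaluates the hypotheses of Lemma~\ref{lemma:left-inc-a-density-psam} correctly at every row. The structural fact I would use is that the occurrences of $\sigma$ in $\bstring$ split the rows into intervals on each of which $\nextmatch{i}{\sigma}{\bstring}$ is constant; the \emph{block} of pivotal points handled together is, essentially, the set of pivotal points of $\density{\psam}$ whose rows fall between two consecutive such occurrences, and its delimiters $k^-,k^+$ are recovered by a single $\prevmatchb{\bstring}$/$\nextmatchb{\bstring}$ query on the block's first pivotal point. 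Within a block, the quantity $j_{\min}$ of Lemma~\ref{lemma:left-inc-a-density-psam} for a row $i$ equals the running minimum of the pivotal columns seen so far during a backward (decreasing-row) scan of that block; hence one forward scan of $P$ to delimit the block, followed by one backward scan that accumulates $j_{\min}$ and tests the match condition $\bstring[i]=\sigma$, emits exactly the nonzero cells of $\density{\difftables}$ attributable to that block, and ranging over all blocks, plus the column-$0$ cell of Corollary~\ref{corollary:left-inc-a-special-pivotal}, yields all of $\density{\difftables}$.

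It then remains to merge $\density{\difftables}$ into $\density{\psam}$ to produce the sorted pivotal-point list of $\density{\psam'}$, restoring the invariant. By Lemma~\ref{lemma:left-inc-a-pivotal-points} every nonzero cell of $\density{\difftables}$ is a $+1$ on a pivotal column of its own row or a $-1$ on a later pivotal point's column, so the merge only cancels a $+1$ of $\density{\difftables}$ against a $-1$ of $\density{\psam}$, and the result is again sub-unit anti-Monge; by the analogue of Corollary~\ref{corollary:encoding-dist}, using $LCS(\sigma\astring,\bstring)\le L+1$, the matrix $\density{\psam'}$ --- and hence $\density{\difftables}=\density{\psam'}-\density{\psam}$ --- has $O(L)$ nonzero cells, so the whole update (the block scans, the cell emission, and the merge) runs in $O(L)$ time, after a one-time $O(n)$ preprocessing. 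The append-to-$\bstring$ variant follows by the symmetric argument with $\astring$ as the static string: the $\nextmatchb{\astring}$/$\prevmatchb{\astring}$ tables are then built on $\astring$, giving $O(m)$ space, while the per-update time stays $O(L)$.

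The step I expect to be the main obstacle is the bookkeeping in the second paragraph: proving that the block decomposition matches the preconditions of Lemma~\ref{lemma:left-inc-a-density-psam} at the \emph{boundary rows} --- those rows $i$ with $\bstring[i]=\sigma$, which sit at a block's end and re-enter the processing of the next block through $\prevmatch{i}{\sigma}{\bstring}=i$ --- so that every nonzero cell of $\density{\difftables}$ is emitted exactly once and none spuriously, and confirming that the running minimum really realises $j_{\min}$ even for block rows that carry no pivotal point of $\density{\psam}$. A secondary point, needed for the $O(L)$ time bound, is the remark that although $\difftables$ may have $\Omega(n)$ step indices, $\density{\difftables}$ has only $O(L)$ nonzero cells, which is exactly why only the pivotal points --- and never the step indices --- are materialised.
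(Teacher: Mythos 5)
Your plan coincides with the paper's own proof: the same encoding (the list $P$ of pivotal points plus static $\nextmatchb{\bstring}$/$\prevmatchb{\bstring}$ tables built in $O(n)$ preprocessing), the same block decomposition delimited by $k^-=\prevmatch{i_1}{\sigma}{\bstring}$ and $k^+=\nextmatch{i_1}{\sigma}{\bstring}$, the same forward-scan/backward-scan with a running minimum realising $j_{\min}$, the same application of Lemma~\ref{lemma:left-inc-a-density-psam} and Corollary~\ref{corollary:left-inc-a-special-pivotal}, and the same $O(L)$ accounting and symmetric treatment of the append-to-$\bstring$ case. The ``boundary-row'' issue you flag is exactly what the case analysis on $\bstring[i]=\sigma$ versus $\bstring[i]\neq\sigma$ in Lemma~\ref{lemma:left-inc-a-density-psam} settles, so your proposal is correct and essentially identical to the paper's argument.
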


\begin{figure}
\label{figure:k-psam-algorithm-example}

\begin{minipage}{0.4\linewidth}

\subfigure[$\density{\psam}$]{
\begin{tikzpicture}[scale=0.5]

\tikzstyle{pivotalpoint}=[circle, draw, inner sep=0pt, minimum
size=5pt, fill=gray]
\tikzstyle{matchpoint}=[red]

\usetikzlibrary{decorations.pathreplacing}

\draw[thick] (0,0) -- (9,0) -- (9,8) -- (0,8) -- (0,0);

\draw[gray,very thin] (0,0) grid (9,8);

\node[pivotalpoint] (p1) at (5,5) {};
\node[pivotalpoint] (p2) at (2,3) {};
\node[pivotalpoint] (p3) at (1,2) {};
\node[pivotalpoint] (p4) at (6,1) {};

\draw[matchpoint] (0,7) -- (9,7);
\draw[matchpoint] (0,6) -- (9,6);
\draw[matchpoint] (0,4) -- (9,4);
\draw[matchpoint] (0,1) -- (9,1);

%
%
%

\end{tikzpicture}
}
\end{minipage}
\hfill
\begin{minipage}{0.55\linewidth}
\begin{tabular}{l|l|l}
$i$ & $\prevmatch{i}{a}{\bstring}$ & $\nextmatch{i}{a}{\bstring}$ \\ \hline
0   & $-\infty$                         & 1                                 \\
1   & 1                                 & 2                                 \\
2   & 2                                 & 4                                 \\
3   & 2                                 & 4                                 \\
4   & 4                                 & 7                                 \\
5   & 4                                 & 7                                 \\
6   & 4                                 & 7                                 \\
7   & 7                                 & $\infty$                          
\end{tabular}

\subfigure{
 $P = \{ (3,5), (5,2), (6,1), (7,6) \}$
 }
\end{minipage}
\caption{A run of the algorithm
on the strings $\astring$ and $\bstring$ of Figure~\ref{figure:left-inc-a},
and a prepend of the character $a$ to $\astring$.
The algorithm begins with $i_1+1=3$, where $k_1^- =
\prevmatch{i_1}{\sigma}{\bstring}=2$, and $k_1^+ =
\nextmatch{i_1}{\sigma}{\bstring}=4$.
The only pivotal point in rows $2,3,4$ is $(3,5)$,
hence  $\density{\difftables}[3,5]=1$ by part~1 of
Lemma~\ref{lemma:left-inc-a-density-psam}
and $\density{\difftables}[2,5]=-1$ by part~2 of
Lemma~\ref{lemma:left-inc-a-density-psam}.
The following scanned pivotal point is $(5,2)$ for which
$k_2^-=4$ and $k_2^+=7$. The backward scan begins with the pivotal point
$(7,6)$ and $j_{\min} = \infty$.
By Lemma~\ref{lemma:left-inc-a-density-psam} set
$\density{\difftables}[7,6]=1$.
The value of $j_{\min}$ is $6$ when scanning the
pivotal point $(6,1)$, thus we set $\density{\difftables}[6,1]=1$ and
$\density{\difftables}[6,6]=-1$.
When scanning the pivotal point $(5,2)$ the value of $j_{\min}$ is 1,
thus row 5 does not contain non zero elements.
The block scan is complete after setting $\density{\difftables}[4,1]=-1$.
Lastly, we set $\density{\difftables}[7,0]=1$ by
Corollary~\ref{corollary:left-inc-a-special-pivotal}.}
\end{figure}

\subsection{$2$-sided incremental $\psam$ matrix}
\label{subsection:2-sided-psam}
For the 2-sided case we note that both $\astring$ and $\bstring$ are subjected
to modifications. Thus, the look-up tables must be updated dynamically.
We suggest the following data structure that supports
$\prevmatchb{\bstring}$ and $\nextmatchb{\bstring}$ queries in constant time,
and append operations to $\bstring$ in constant time.

We start with a data structure for $\prevmatch{i}{\sigma}{\bstring}$. 
Note that the values of $\prevmatch{i}{\sigma}{\bstring}$ remain constant
when append operations to $\bstring$ are applied.
Thus, we store a dynamic array $\prevtableb{\bstring}$ in which
$\prevtable{i}{\sigma}{\bstring} = \prevmatch{i}{\sigma}{\bstring}$
for all $i$ and $\sigma$
(a dynamic array is a data structure that stores an array and allows
appending a constant number of cells to the end of the array in constant
worst-case time).
Assuming that $\bstring$ is of length $n$ and a character $\tau$ is appended to
$\bstring$, we only need to update the cell
$\prevtable{n+1}{\sigma}{\bstring}$ for every $\sigma \in \Sigma$.
Clearly,
$\prevtable{n+1}{\sigma}{\bstring} = \prevtable{n}{\sigma}{\bstring}$ if
$\sigma \neq \tau$ and $\prevtable{n+1}{\tau}{\bstring} = n+1$.

For the data structure that supports $\nextmatch{i}{\sigma}{\bstring}$
queries, it is no longer the case that the values remain constant.
To handle this, we use a dynamic array $\nexttableb{\bstring}$ that is
defined as follows:
$\nexttable{i}{\sigma}{\bstring} = \nextmatch{i}{\sigma}{\bstring}$
if $i = 0$ or $\bstring[i] = \sigma$.
Otherwise, $\nexttable{i}{\sigma}{\bstring}$ contains an arbitrary
value.
We can now compute $\nextmatch{i}{\sigma}{\bstring}$ using
the following equality:
$\nextmatch{i}{\sigma}{\bstring} = \nexttable{i'}{\sigma}{\bstring}$
where $i'=\max(0,\prevmatch{i}{\sigma}{\bstring})$.
Assuming that $\bstring$ is of length $n$ and a character $\tau$ is appended to
$\bstring$, we update $\nexttableb{\bstring}$ by
setting $\nexttable{i}{\tau}{\bstring} = n+1$ where
$i = \prevmatch{n}{\tau}{\bstring}$.

Now, using the dynamic data structures, the technique remains the same as
described in Section~\ref{subsection:increment-psam}. Note that when an append
operation is preformed, the set of pivotal points needs to be sorted by the
column indices. This requires $O(L \log \log L)$ time for sorting the set of $L$
pivotal points using the sorting algorithm
of~\cite{han2004deterministic}.
This leads to the following theorem.

\begin{theorem}\label{theorem:2-sided-psam}
Given strings $\astring$ and $\bstring$ of lengths $m$ and $n$,
respectively, over alphabet $\Sigma$ with $L=LCS(\astring,
\bstring)$.
We can construct an $O(m+n)$ space data structure that encodes the LCS score
between every suffix of $\bstring$ and every prefix of $\astring$ and
supports prepend operations to $\astring$ and append operations to $\bstring$ in
$O(L \log \log L)$ time.
\end{theorem}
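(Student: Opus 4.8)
The plan is to build the $2$-sided structure from three components: the $1$-sided incremental routine of Section~\ref{subsection:increment-psam} (and its mirror image, which underlies Theorem~\ref{theorem-1-sided-psam}), dynamic replacements for the static $\nextmatch$/$\prevmatch$ look-up tables used there, and a re-sorting of the list $P$ of pivotal points between operations. The starting observation is that the routine of Section~\ref{subsection:increment-psam} for a prepend to $\astring$ touches the string $\bstring$ only through $O(L)$ queries of the form $\nextmatch{\cdot}{\sigma}{\bstring}$ and $\prevmatch{\cdot}{\sigma}{\bstring}$, and, symmetrically, a single append to $\bstring$ touches $\astring$ only through $O(L)$ queries $\nextmatch{\cdot}{\tau}{\astring}$ and $\prevmatch{\cdot}{\tau}{\astring}$. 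Hence, if each such query costs $O(1)$ on the current strings and each update maintains the relevant tables in $O(1)$ time, the $O(L)$ bound of the $1$-sided routines is preserved, and what remains is to supply these dynamic tables and to fix up the storage order of $P$.

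Next I would verify correctness of the dynamic tables $\prevtableb{\bstring}$ and $\nexttableb{\bstring}$ described above, together with their mirror images $\prevtableb{\astring}$ and $\nexttableb{\astring}$ (which maintain $\prevmatch$ and $\nextmatch$ on $\astring$ under prepend operations, since prepending to $\astring$ extends it to the left just as appending to $\bstring$ extends it to the right). For $\prevtableb{\bstring}$ this is immediate: appending $\tau$ leaves $\prevmatch{i}{\sigma}{\bstring}$ unchanged for every old index $i$, so only the new entry $\prevtable{n+1}{\sigma}{\bstring}$ must be written, equal to $n+1$ if $\sigma=\tau$ and to $\prevtable{n}{\sigma}{\bstring}$ otherwise. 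For $\nexttableb{\bstring}$ the point to check is the recovery identity $\nextmatch{i}{\sigma}{\bstring}=\nexttable{i'}{\sigma}{\bstring}$ with $i'=\max(0,\prevmatch{i}{\sigma}{\bstring})$: if $\prevmatch{i}{\sigma}{\bstring}\ge 1$ then $\bstring[i']=\sigma$ and there is no occurrence of $\sigma$ strictly between $i'$ and $i$, so the first occurrence after $i$ coincides with the first occurrence after $i'$; and if $\prevmatch{i}{\sigma}{\bstring}=-\infty$ then $\bstring[0..i]$ has no $\sigma$, so the first occurrence after $i$ is the global first occurrence $\nexttable{0}{\sigma}{\bstring}$. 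One then argues that appending $\tau$ changes $\nextmatch{i}{\sigma}{\bstring}$ only when $\sigma=\tau$ and $\bstring[i+1..n]$ contains no $\tau$, i.e.\ exactly at the representative index $i=\prevmatch{n}{\tau}{\bstring}$ under the recovery identity, so the single write $\nexttable{i}{\tau}{\bstring}\leftarrow n+1$ with $i=\prevmatch{n}{\tau}{\bstring}$ restores correctness. Storing each of these four tables in a dynamic array gives $O(1)$ queries, and each update writes $O(|\Sigma|)=O(1)$ cells and appends $O(1)$ cells; since $\bstring$ never shrinks and $\astring$ is only extended, the dynamic-array appends are $O(1)$ worst-case.

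Then I would handle the ordering of $P$. The $1$-sided routine for a prepend to $\astring$ scans $P$ by increasing row index, while the symmetric routine for an append to $\bstring$ scans $P$ by column index; the block-order trick of Section~\ref{subsection:2-sided-ssam} does not rescue us here because, as already noted, the pivotal points inside a block of $\density{\psam}$ need not occupy consecutive rows or columns. So whenever the next operation is of a different type than the previous one, I re-sort the $L$ pivotal points into the order needed; since the row and column indices are integers bounded by $m+n$, this costs $O(L\log\log L)$ time and linear space by the deterministic integer-sorting algorithm of~\cite{han2004deterministic}. Invoking the appropriate $1$-sided routine on the correctly ordered $P$ (with the dynamic look-up tables in place of the static ones) then costs $O(L)$, and the new pivotal points of $\density{\psam'}$ are produced exactly as in Section~\ref{subsection:increment-psam}; hence a single operation of either type runs in $O(L\log\log L)$ worst-case time. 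For space: $\prevtableb{\bstring}$ and $\nexttableb{\bstring}$ occupy $O(n)$ cells, $\prevtableb{\astring}$ and $\nexttableb{\astring}$ occupy $O(m)$ cells, $P$ occupies $O(L)=O(n)$ cells, and the strings occupy $O(m+n)$, for $O(m+n)$ total, as claimed.

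I expect the main obstacle to be the correctness argument for $\nexttableb{\bstring}$ under updates — in particular, isolating precisely which entries become stale after an append and showing that the single constant-time write reinstates the recovery identity for all indices — together with the routine but slightly delicate check that the complexity analysis of Section~\ref{subsection:increment-psam} carries over verbatim once the static tables are swapped for the dynamic ones and the re-sorting step is inserted; the remaining combination of parts is bookkeeping.
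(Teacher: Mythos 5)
Your proposal is correct and follows essentially the same route as the paper: reuse the $1$-sided routine of Section~\ref{subsection:increment-psam}, replace the static look-up tables by the dynamic arrays $\prevtableb{\bstring}$/$\nexttableb{\bstring}$ (and their mirrors for $\astring$) with the recovery identity $\nextmatch{i}{\sigma}{\bstring}=\nexttable{\max(0,\prevmatch{i}{\sigma}{\bstring})}{\sigma}{\bstring}$, and re-sort the $L$ pivotal points with the $O(L\log\log L)$ integer-sorting algorithm of~\cite{han2004deterministic} when the operation type requires the other ordering. Your explicit verification of the table-update correctness and of the $\astring$-side tables only spells out what the paper leaves implicit by symmetry.
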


\section{Incremental $\ssam$ and $\psam$ matrices}\label{section:ssam+psam}
In this section we show how to maintain the matrices $\ssam$ and $\psam$ and
support append operations to either $\astring$ or $\bstring$.

Recall that $\psam$ is over the index set $[0:n] \times [0:m]$ and $\ssam$ is
over the index set $[0:n] \times [0:n]$. We store a list $P_{\ssam}$ for the
matrix $\ssam$ as defined in Section~\ref{section:k-ssam}, and a list $P_{\psam}$ for the matrix $\psam$ as defined in Section~\ref{section:k-psam}.
Both lists are sorted by decreasing column indices.

First consider an append operation to $\bstring$. This operation
adds a new row to the matrix $\psam$, and a new row and column to $\ssam$. 
Denote the modified matrices by $\psam'$ and $\ssam'$.
The modifications to the matrix $\psam$ are carried as described in
Section~\ref{subsection:increment-psam}. 
The matrix $\ssam'$, on the other hand, does not vary much from the matrix
$\ssam$, since the submatrix
$\submat{\ssam'}{0}{n}{0}{n}$ is not affected by the newly added symbol.
However, note that a new pivotal point
may be added to the matrix $\density{\ssam'}$ at the newly appended column. In
the following lemma we show how to compute this pivotal point.

\begin{lemma}
If $(i,m)$ is a step index in $\difftables = \psam' - \psam$ then $(i,n+1)$ is a
pivotal point in $\density{\ssam'}$. Otherwise, if
$\prevmatch{m}{\sigma}{\astring} = -\infty$ (i.e. there is no new match-point)
then $(n+1,n+1)$ is a pivotal point in $\density{\ssam'}$.
\end{lemma}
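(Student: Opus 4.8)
The plan is to match the new $(n+1)$-st column of $\density{\ssam'}$ against the last column of the difference matrix $\difftables$, and to settle the single remaining corner entry by a direct computation from the boundary convention. First I would record the identities that survive the update. Writing $\bstring' = \bstring\sigma$ (which has length $n+1$), for every $i\in[0:n+1]$ we have $\ssam'[i,n+1] = LCS(\bstring'[i..n+1],\astring) = \psam'[i,m]$, and for every $i\in[0:n]$ we have $\ssam'[i,n] = LCS(\bstring[i..n],\astring) = \ssam[i,n] = \psam[i,m]$. Moreover $\submat{\ssam'}{0}{n}{0}{n} = \ssam$, so $\density{\ssam'}$ agrees with $\density{\ssam}$ on $[1:n]\times[1:n]$ and hence every pivotal point of $\density{\ssam'}$ not inherited from $\density{\ssam}$ lies in row $n+1$ or in column $n+1$. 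Substituting the first two identities into the definition of the density matrix gives
\[
\density{\ssam'}[i,n+1] = \bigl(\psam'[i,m]-\psam[i,m]\bigr) - \bigl(\psam'[i-1,m]-\psam[i-1,m]\bigr) = \difftables[i,m]-\difftables[i-1,m]
\]
for every $i\in[1:n]$.

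Next I would locate the pivotal point in column $n+1$. Since $\ssam'$ is sub-unit Monge, every entry $\density{\ssam'}[i,n+1]$ lies in $\{0,1\}$ and at most one of them is non-zero; equivalently, $\difftables[0,m],\dots,\difftables[n,m]\in\{0,1\}$ is a non-decreasing sequence, so it is $0$ on a prefix of rows and $1$ afterwards. Therefore, for $i\in[1:n]$, $\density{\ssam'}[i,n+1]=1$ exactly when $\difftables[i,m]=1$ and $\difftables[i-1,m]=0$, i.e.\ exactly when $(i,m)$ is a step index of $\difftables$ in column $m$, which proves the first assertion.

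It remains to treat the case where column $m$ of $\difftables$ has no step index, so $\difftables[\cdot,m]$ is identically $0$ or identically $1$ on $[0:n]$ and $\density{\ssam'}[i,n+1]=0$ for every $i\in[1:n]$. Then the only candidate for a new pivotal point is $(n+1,n+1)$, and using $\ssam'[i,j]=j-i$ for $i>j$ together with $\ssam'[n,n]=\ssam'[n+1,n+1]=LCS(\epsilon,\astring)=0$ and $\ssam'[n,n+1]=LCS(\sigma,\astring)$, a short calculation yields $\density{\ssam'}[n+1,j]=0$ for $1\le j\le n$ and $\density{\ssam'}[n+1,n+1]=1-LCS(\sigma,\astring)$, which equals $1$ precisely when $\sigma$ does not occur in $\astring$, i.e.\ when $\prevmatch{m}{\sigma}{\astring}=-\infty$. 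This is the second assertion. The two cases are disjoint since a step index forces $\difftables\not\equiv 0$, and $\difftables\not\equiv 0$ is equivalent to $\sigma$ occurring in $\astring$ (in the only remaining subcase, $\sigma$ in $\astring$ but no step index in column $m$, the sequence $\difftables[\cdot,m]$ is identically $1$ and, by the same computation, no pivotal point is added).

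The hard part is the bookkeeping at the fresh boundary: keeping straight which index range ($[0:n]$ versus $[0:n+1]$) each matrix lives on, applying the $j-i$ convention correctly in row and column $n+1$, and recalling that density indices start at $1$, so the transition of $\difftables[\cdot,m]$ that would occur ``above row~$0$'' (which happens exactly when $LCS(\bstring',\astring)=L+1$) produces no pivotal point. Once the identity $\density{\ssam'}[i,n+1]=\difftables[i,m]-\difftables[i-1,m]$ is in place, sub-unit Monge-ness of $\ssam'$ supplies the monotonicity and does the rest of the work.
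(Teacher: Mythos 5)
Your proof is correct and follows essentially the same route as the paper's: you identify column $n+1$ of $\ssam'$ with column $m$ of $\psam'$ (and column $n$ with column $m$ of $\psam$), derive $\density{\ssam'}[i,n+1]=\difftables[i,m]-\difftables[i-1,m]$, and settle the corner entry $\density{\ssam'}[n+1,n+1]$ by direct computation from the $j-i$ boundary convention. The additional observations you make (monotonicity of $\difftables[\cdot,m]$ via sub-unit Monge-ness, the converse direction, and the ``step above row $0$'' subcase) are consistent strengthenings rather than a different argument.
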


\begin{proof}
Note that the $(n+1)$'th column of $\ssam'$ is precisely the $m$'th column of
the matrix $\psam'$, and the $n$'th column of $\ssam$ is precisely the $m$'th
column of $\psam$. Hence if $(i,m)$ is a step index in $\difftables = \psam' -
\psam$ then $\density{\ssam'} [i,n+1] = (\psam'[i,m]-\psam[i,m]) -
(\psam'[i-1,m]-\psam[i-1,m]) = \difftables[i,m] - \difftables[i-1,m] = 1$.
Otherwise, if $\prevmatch{m}{\sigma}{\astring} = -\infty$, then $\ssam' [n,n+1]
= 0$ and by definition we get $\density{\ssam'}[n+1,n+1] = 1$.
\end{proof}

The step index on the $m$'th column of $\psam$ can be computed using
Corollary~\ref{corollary:left-a-min-max-pivotal} in $O(L)$ time.
Hence the modifications to both $\ssam$ and $\psam$ can be carried out in $O(L)$
time.

An append operation to $\astring$ can be carried out analogously.
In this case the matrix $\ssam$ is modified, and the list $P_{\ssam'}$ can be
obtained as described in Section~\ref{subsection:increment-ssam}. A new column
is also appended to $\psam$. In this case we have the following lemma.

\begin{lemma}\label{lemma:schmist-right-inc}
If $(i,n)$ is a step index in $\difftables = \ssam' - \ssam$ then $(i,m+1)$ is a
pivotal point in $\density{\psam'}$.
\end{lemma}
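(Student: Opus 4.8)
The plan is to argue exactly as in the proof of the preceding lemma, using the observation that appending $\sigma$ to $\astring$ leaves $\psam$ unchanged on columns $0,\dots,m$, while the new column $m+1$ of $\psam'$ literally coincides with the modified last column of $\ssam'$.

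The first step is to record the column identities linking the two matrices. Since $(\astring\sigma)[0..j]=\astring[0..j]$ for every $j\le m$, we have $\psam'[i,j]=\psam[i,j]$ for all $i$ and all $j\le m$. Moreover $\ssam[i,n]=LCS(\bstring[i..n],\astring)=\psam[i,m]$ and, since $\astring\sigma$ has length $m+1$, $\ssam'[i,n]=LCS(\bstring[i..n],\astring\sigma)=\psam'[i,m+1]$. The second step is to expand $\density{\psam'}[i,m+1]=\psam'[i,m+1]+\psam'[i-1,m]-\psam'[i-1,m+1]-\psam'[i,m]$ and substitute these identities, so that all four terms become entries of $\ssam$ and $\ssam'$ in column $n$; the expression then telescopes to $(\ssam'[i,n]-\ssam[i,n])-(\ssam'[i-1,n]-\ssam[i-1,n])=\difftables[i,n]-\difftables[i-1,n]$, where $\difftables=\ssam'-\ssam$. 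I expect this telescoping identity to be the heart of the proof; once it is in place the lemma drops out, just as it did for the $\density{\ssam'}$ case above.

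The last step is to invoke the structure of $\difftables$ for an append to $\astring$ supplied by Lemma~\ref{lemma:right-inc-b-main}: $\difftables$ is a $0/1$ matrix each of whose columns is non-increasing in the row index, because the condition imposed on $i$ there, that $i$ be smaller than the smallest row index of a pivotal point of a fixed submatrix, is downward closed in $i$. Hence if $(i,n)$ is a column step index of $\difftables$, i.e.\ $\difftables[i,n]\ne\difftables[i-1,n]$, then necessarily $\difftables[i-1,n]=1$ and $\difftables[i,n]=0$, so $\density{\psam'}[i,m+1]=-1\ne 0$ and $(i,m+1)$ is a pivotal point of $\density{\psam'}$; the value $-1$ is also the expected sign since $\psam'$ is sub-unit anti-Monge. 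The only things needing a little care are the index ranges — $i$ ranges over $[1:n]$, which is exactly the valid row range of $\density{\psam'}$, and $i-1\le n$ keeps $\ssam'[i-1,n]$ inside the index set of $\ssam'$ — and remembering to use the column-step-index notion appropriate to an append operation rather than the row-step-index notion of Section~\ref{subsection:increment-ssam}; neither is a genuine obstacle.
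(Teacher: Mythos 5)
Your proof is correct and mirrors exactly how the paper handles this lemma: the paper leaves it as ``analogous'' to the preceding one, whose proof is precisely your column-identification plus telescoping argument showing $\density{\psam'}[i,m+1]=\difftables[i,n]-\difftables[i-1,n]$. Your additional use of Lemma~\ref{lemma:right-inc-b-main} to pin down the sign as $-1$, and your remark that the column-step-index notion is the relevant one for an append to $\astring$, are accurate and consistent with the paper's conventions.
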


By applying Lemma~\ref{lemma:schmist-right-inc} for the matrix
$\density{\psam'}$ and using the approach described in
Section~\ref{subsection:increment-ssam} for the matrix $\density{\ssam'}$, we
can compute the lists $P_{\psam'}$ and $P_{\ssam'}$ in $O(\Delta)$
time.

We derive the following theorem.

\begin{theorem}\label{theorem:ssam+psam}
Given strings $\astring$ and $\bstring$ of lengths $m$ and $n$,
respectively, over alphabet $\Sigma$ with $L=LCS(\astring,
\bstring)$ and $\Delta=n-L$.
We can construct an $O(m+n)$ space data structure that encodes the LCS score
between $\astring$ and every substring of $\bstring$, and between every suffix
of $\bstring$ and every prefix of $\astring$. This data-structure supports
append operations to $\astring$ or $\bstring$ in $O(\Delta)$ and $O(L)$
time, respectively.
\end{theorem}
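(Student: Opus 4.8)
The plan is to run, side by side, the two single-matrix update procedures already developed — the $\ssam$ update of Section~\ref{subsection:increment-ssam} and the $\psam$ update of Section~\ref{subsection:increment-psam} — linking them through the two lemmas stated earlier in this section, and to check that the auxiliary tables these procedures need survive the allowed appends for free. Concretely, I would keep the pivotal points of $\density{\ssam}$ in a list $P_{\ssam}$ and those of $\density{\psam}$ in a list $P_{\psam}$, both stored in block order by decreasing column index as in Section~\ref{subsection:2-sided-ssam}; by Corollary~\ref{corollary:encoding-dist} these have sizes $\Delta$ and $L$, hence $O(m+n)$ together. In addition, for each of $\astring$ and $\bstring$ I would keep dynamic next-/previous-match tables answering queries in $O(1)$ time: since in this section both strings only grow by appending, $\prevmatchb{\astring}$ and $\prevmatchb{\bstring}$ are plain append-only dynamic arrays, while $\nextmatchb{\astring}$ and $\nextmatchb{\bstring}$ are maintained exactly like the array $\nexttableb{\bstring}$ of Section~\ref{subsection:2-sided-psam}; each absorbs an append in $O(1)$ time, and together they occupy $O(m+n)$ space.

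\textbf{Appending $\sigma$ to $\bstring$.} First recompute $\psam'$ from $P_{\psam}$ using the append-to-$\bstring$ variant of the algorithm of Section~\ref{subsection:increment-psam} (cf.\ Theorem~\ref{theorem-1-sided-psam}), obtaining $P_{\psam'}$ in $O(L)$ time; this run also exposes the step indices of $\difftables=\psam'-\psam$, in particular the unique step index in column $m$, which is anyway locatable by the append-to-$\bstring$ analogue of Corollary~\ref{corollary:left-a-min-max-pivotal} in $O(L)$ time. For $\ssam$, I would use that $\submat{\ssam'}{0}{n}{0}{n}$ coincides with $\ssam$ — so every point of $P_{\ssam}$ is retained — and that column $n+1$ of $\ssam'$ is column $m$ of $\psam'$; by the first lemma of this section the only candidate new pivotal point of $\density{\ssam'}$ is $(i,n+1)$ for that column-$m$ step index $(i,m)$, or else $(n+1,n+1)$ when $\prevmatch{m}{\sigma}{\astring}=-\infty$. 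Prepending this single point to $P_{\ssam}$ (it has the largest column index) yields $P_{\ssam'}$, so the whole update costs $O(L)$.

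\textbf{Appending $\sigma$ to $\astring$.} Symmetrically, first recompute $\ssam'$ from $P_{\ssam}$ by the append-to-$\astring$ algorithm of Section~\ref{subsection:increment-ssam} (cf.\ Lemma~\ref{lemma:right-inc-b-main} and Theorem~\ref{theorem:1-sided-ssam}) in $O(\Delta)$ time; that run produces, as a by-product, all (column) step indices of $\difftables=\ssam'-\ssam$, among them those in column $n$. The matrix $\psam$ gains only the new column $m+1$, and by Lemma~\ref{lemma:schmist-right-inc} its new pivotal points are exactly the cells $(i,m+1)$ with $(i,n)$ a step index of $\difftables$; I would append these to $P_{\psam}$ (they carry the largest column index) to obtain $P_{\psam'}$. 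Total time $O(\Delta)$.

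\textbf{Correctness, complexity, and the main obstacle.} Correctness of the $\psam$ (resp.\ $\ssam$) recomputation is exactly the argument of Section~\ref{subsection:increment-psam} (resp.\ Section~\ref{subsection:increment-ssam}), and correctness of the single extra pivotal point added to the other matrix is the content of the two lemmas of this section; each update then touches $O(\Delta)$ or $O(L)$ pivotal points plus $O(1)$ table maintenance, giving the claimed times, while the space is that of $P_{\ssam}$, $P_{\psam}$ and the look-up tables, i.e.\ $O(m+n)$. The point that needs the most care is keeping everything within an $O(\Delta)/O(L)$ budget with no sorting: both lists are kept in decreasing-column (block) order, each update reads them in that order and writes back only the $O(\Delta)$ or $O(L)$ changed or added points — re-clustering in linear time as in Section~\ref{subsection:2-sided-ssam} if needed — the lone new pivotal point always lands at the high-column end, and all four next-/previous-match tables are refreshed across the whole sequence of appends, the $\nextmatchb{\cdot}$ ones (whose entries genuinely change) via the $\nexttableb{\cdot}$ device of Section~\ref{subsection:2-sided-psam}. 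This is precisely what keeps the construction free of the $O(\log\log L)$ factor incurred by the two-sided $\psam$ result.
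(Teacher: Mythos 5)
Your proposal follows essentially the same route as the paper: store $P_{\ssam}$ and $P_{\psam}$ ordered by decreasing column index, recompute the directly affected matrix with the $1$-sided algorithms of Sections~\ref{subsection:increment-ssam} and~\ref{subsection:increment-psam}, and obtain the single new pivotal point of the other matrix's new column via the two lemmas of Section~\ref{section:ssam+psam}, giving $O(\Delta)$ and $O(L)$ updates in $O(m+n)$ space. Your explicit maintenance of the next-/previous-match tables under appends (via the dynamic-array device of Section~\ref{subsection:2-sided-psam}) is a correct filling-in of a detail the paper leaves implicit, not a different approach.
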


\bibliographystyle{abbrv}
\bibliography{bibliography}

\end{document}